\begin{document}
\title{\bf {PURE DATA SPACES}}
\author{%
  SAUL YOUSSEF%
  \hfil \\
  Department of Physics \\
  Boston University \\
  Saul.Youssef@gmail.com
}
\maketitle
\begin{abstract}
In a previous work, {\it pure data} is proposed as an axiomatic foundation for mathematics and computing, based on {\it finite sequence} as 
the foundational concept rather than based on {\it logic} or {\it type}.   Within this framework, objects with mathematical meaning are 
{\it data}, and, we argue, collections of mathematical objects must then be associative data, called a {\it space}.  A space is then the basic collection in this 
framework analogous to sets in Set Theory or objects in Category Theory.  A theory of spaces is developed, where 
spaces are studied via their semiring of endomorphisms.  To illustrate these concepts, and as a way of 
exploring the implications of the framework, pure data spaces are ``grown organically'' from the substrate of 
pure data with minimal combinatoric definitions.  Familiar objects from classical mathematics emerge this way,   
including natural numbers, integers, rational numbers, boolean spaces, matrix algebras, Gaussian Integers, Quaternions, 
and non-associative algebras like the Integer Octonions and Sedenions. 
Insights from these examples are discussed with a view towards new directions in theory and new exploration.
\end{abstract}

\theoremstyle{definition}
\newtheorem{axiom}{Axiom}
\newtheorem*{axiom*}{Axiom}
\newtheorem*{fact}{Fact}
\newtheorem{theorem}{Theorem}[section]
\newtheorem{lemma}{Lemma}
\newtheorem{corollary}[theorem]{Corollary}

\newtheorem{definition}{Definition}

\newtheorem*{remark}{}

\section{Introduction}

    In {\it Pure Data Foundation of Mathematics and Computing} \cite{PDF}, we proposed a formal system based on {\it finite sequence} as the foundational concept, rather than based on {\it logic}, as in Set Theory or 
based on {\it type}, as in dependent type theories.  In this framework, pure data acquires meaning via definitions.  
A collection of definitions defines equality of data which, therefore, relates mathematical quantities to 
each other.  This makes for a drastically simplified relationship between proof and computation.   
Any sequence $A_1$=$A_2$=$\dots$=$A_n$ is both a valid proof of $A_n$ given $A_1$ and a valid computation 
of $A_n$ starting with $A_1$.  

       The previous work mainly developed  
consistency, logic, and the resolution of paradoxes\cite{PDF}.  A goal of this work is to begin the exploration of classical 
mathematics through the lens of pure data.  In particular, doing mathematics requires 
collections of objects as in Set Theory or Category Theory.  In the pure data context, minimal requirements for 
data to contain a collection of other data leads to a simple conceptual definition of a ``space.''  A space is any data 
which is ``associative.''  Although a space is like a set in the sense that it is a minimal collection, we surprisingly 
find that spaces have a rich internal structure which is shared by everything from a single atom to the entire 
space of pure data.  Understanding spaces appears to be the key 
to better understanding the relationship of this framework to classical mathematics.  

     In Sections 1-4, we review the axiomatic framework and establish a suitable minimal set of starting definitions.  
 Section 6 develops the theory of spaces, morphisms of spaces, semirings, semialgebras and fields.   Sections 7-14 
 are an exploration of the spaces which ``organically'' emerge from pure data, and how they relate to familiar mathematical objects.  
 Section 15 summarizes and suggests research directions. 

\section{Foundation}  

In \cite{PDF}, we argue that mathematics and computing can both be captured within a small axiomatic framework, 
based on {\bf finite sequence} as the foundational concept.  Assuming that finite sequences are understood, {\it data} and {\it coda} are defined by 

\begin{definition} {{\bf Data} is a finite sequence of {\bf codas}, where each {\bf coda} is a pair of {\bf data}.}
\end{definition}

\noindent The two foundational operations defined on data are 1) concatenation of data $A$ and data $B$, written `$A\ B$', and 2) pairing of data $A$ and data $B$ as a coda, written with a colon `$A:B$'.   
By Definition 1, the empty sequence, written `()', qualifies as data and, therefore, () paired with itself is a coda, written `(():())' or `(:)'.  
Any finite sequence of codas is data, so, for example, (:)\ (:(:))\ ((:):(:(:))) is  
data consisting of a sequence of three codas.  We can think of this as {\it pure data} since it is ``data made of nothing.''  
By convention, the colon binds from the right first and binds less strongly than concatenation, so that $A:B:C$ is defined to be $(A:(B:C))$ and $A:B\ C$ is defined to be $(A:(B\ C))$.  
Data is typically written with upper case, and codas are typically written in lower case.  
To indicate the left and right data of a coda, we sometimes use L/R superscripts so, for any coda $c$, $c=(c^L:c^R)$. 

     All meaning within the system is determined by a chosen partial function from coda to data called a {\it context}. 
Contexts are partially ordered by inclusion, so if $\delta$ and $\delta'$ are contexts, $\delta\le\delta'$ if $\delta$ and $\delta'$ are equal on the domain of $\delta$.  
Given a context $\delta$, equality of data is the equivalence generated by $c\sim \delta(c)$ for any coda $c$, and by compatibility with concatenation and colon in the sense that 
if $A\sim B$, then $(A\ X)\sim (B\ X)$, $(X\ A)\sim (X\ B)$, $(A:X)\sim (B:X)$ and $(X:A)\sim(X:B)$ for any data $X$.  This relation $A\sim B$ is denoted $A{\overset\delta =}B$, 
or simply $A=B$ when the context is unambiguous.   
It follows that if $\delta\leq\delta'$, then $A{\overset\delta =}B$ implies $A{\overset{\delta'} =}B$.  We say that if $A$ and $B$ are equal then they are ``always equal,'' thinking 
of moving from $\delta$ to $\delta'$ as the passage of ``time."  The fixed points of a context $\delta$ are {\it atoms}.  Note that if $c$ is an atom in context $\delta$, 
and if $\delta\leq\delta'$, then $c$ is still an atom a context in $\delta'$.  Thus, atoms are ``permanent" and if $c$ is an atom, $c$ is ``always" an atom.  

\begin{definition}
{Within a given context $\delta$, coda $c$ is a {\bf fixed point} if $\delta:c\mapsto c$.  A coda equal to a fixed point is an {\bf atom}.  If data $A$ contains an atom in its sequence, $A$ is {\bf atomic} data. 
Data $A$ is {\bf invariant} if every coda $a$ in the sequence of $A$ is a fixed point and if $a^L$ and $a^R$ are both {\bf invariant}. }
\end{definition}
\noindent It follows from the definition that empty data is invariant and atoms $a$ and $b$ are equal if and only if $a^L=b^L$ and $a^R=b^R$.  If $a$ and $b$ are atoms, 
$A$ and $B$ are data, then $(a\ A)=(b\ B)$ and $(A\ a)=(B\ b)$, if and only if $a=b$ and $A=B$.  If $A=B$ and $A$ and $B$ are invariant, then $A$ and 
$B$ are identical as pure data.

\section{Genesis}

    In the proposed framework, all mathematical objects are pure data and all meaning is defined by a chosen context $\delta$ and its corresponding equivalence.  
A {\bf proof} in $\delta$, for example, is just a sequence $A_1{\overset \delta =}$ $A_2 {\overset \delta =} \dots {\overset \delta =}$$A_n$ which can be viewed either as proving $A_n$ 
given $A_1$ or as a computation $A_1\mapsto A_n$. 
The immediate issue is how to choose $\delta$.  
    
     In the beginning, there are no definitions, and the corresponding context is the empty partial function $\delta_0$.  
Since the domain of $\delta_0$ is empty, it has no fixed points and, therefore, no atoms.  Thus, the empty sequence is the only invariant, 
$X{\overset \delta =}X$ is the only valid proof, and the only valid 
 computations do nothing $X\mapsto X$ for any data $X$.
To define a non-empty context $\delta_0\le\delta$, we need, at least, an invariant specification of the domain of $\delta$ as a partial function.    
Since the empty sequence is the only invariant, the only way to specify if a coda (A:B) in the domain of $\delta$ is 
to require either A or B or both to be the empty. In each of these cases, the coda (:) is within the domain of $\delta$, and so we must, in any case, 
decide on what (:) maps to.  There are three possibilities; 
\begin{itemize}
\item[] choice 1: {$\delta: (:) \mapsto ()$},
\item[] choice 2: {$\delta: (:) \mapsto (:)$}, or 
\item[] choice 3: {$\delta: (:) \mapsto \ anything\ other\ than\ ()\ or\ (:)$}. 
\end{itemize}
Since pure data is ``made of (:)'', choice 1 trivially causes all data to be equal to the empty sequence. 
On the other hand, choice 3 would mean that for any non-empty data $A$, the number of (:) atoms in $A$ grows without limit.  This would mean, for instance, 
that no computation could produce a final result.  Thus, we are constrained to choice 2, where (:) is an invariant atom in $\delta$, and, therefore an atom in 
any context $\delta\leq\delta'$ as well.  It is convenient to generalize choice 2 and to let $\delta$:(:$X$)$\mapsto$ (:$X$) for all data $X$, since this provides a 
supply of invariant atoms (:), (:(:)), (:(:) (:)),$\dots$ which can be used to expand the domain of $\delta$.  

We adopt a convention for expanding the domain of $\delta$ while preserving context partial ordering. 
A context of the form 
\begin{equation}
	\delta_a: (a\ A:B) \mapsto \delta_a(A,B)
\end{equation}
for some invariant atom $a$ is called a {\bf definition}.  We conventionally restrict ourselves to 
contexts $\delta\cup\delta_a\cup\delta_b\cup\dots$ where $a$ is an invariant atom in $\delta$, $b$ is an invariant atom in $\delta\cup\delta_a$, $c$ is an invariant atom in $\delta\cup\delta_a\cup\delta_b$ and so forth.  By requiring $a$, $b$,\dots to be disjoint, we maintain the context partial ordering.  

\section{Definitions}

     Before proceeding to mathematical exploration, we need to add enough expressive power to the original context $\delta$.  To do this, 
we introduce minimal definitions using the mechanism explained in the previous section.  

\begin{enumerate}
\item {Define ``bits'', ``bytes'' and ``byte sequences'' so that words are single atoms, so that $a$ can be a word in future definitions $\delta_a$.}
\item {Define all of the naturally occurring combinatoric operations given two finite sequences $A$ and $B$.  For example $\delta_{\rm ap}$
maps (ap $A$ : b$_1$\ b$_2$ $\dots$ b$_n$) to ($A$:b$_1$)\ ($A$:b$_2$)$\dots$ ($A$:b$_n$), so that ap ``applies $A$ to each atom of input.'' }
\item {Define a minimal internal language $\delta_{\{\}}$ so that a coda $(\{\texttt{language expression}\}\ A : B)$ is mapped to corresponding 
data as described in Section 4.3 below.}
\item {Include definitions giving direct access to context-level values.  This includes a Boolean definition $\delta_{\rm bool}$ to determine if data is empty, 
$\delta_{=}$ to test if two data are equal in $\delta$, and $\delta_{\rm def}$ to add definitions via Equation 1.} 
\end{enumerate}
These are meant to be a minimal collection of definitions only including what inevitably comes from the combinatorics of the elements of 
Equation 1 as finite sequences. 
There are intentionally no definitions which presume underlying arithmetical operations.  A natural number, 
`123', for instance is a single atom byte sequence and there is no definition which implies interpreting this as a number in the usual way.  The idea is
to allow natural numbers and whatever else to emerge ``organically.'' 
There are approximately fifty definitions needed to create a basic system.  We define some here for orientation.  The remaining definitions used 
in the text can be found in the Glossary.

\subsection{Bits, Bytes and Byte sequences}

If `a' is an invariant atom, a new atom with domain (a A:B) is defined by $\delta_a: (a\ A:B)\mapsto (a\ A:B)$, so that $\delta_a$ is a fixed point on its domain.  
We may refer to these as ``a-atoms."  For readability convenience, we define (:)-atoms, ((:):)-atoms and ((:):(:))-atoms to hold bits, bytes and words respectively
so that text strings are invariant atoms within the context $\delta\cup\delta_{(:)}\cup\delta_{((:):)}\cup\delta_{((:):(:))}$.

\subsection{Combinatorics} 

Given that text strings are invariant atoms, we can add definitions with readable names.  Define the identity `pass' so that (${\rm pass}:X$=$X$) for all data $X$.  
\begin{itemize}
\item{$\delta_{\rm pass}$ : ({\rm pass} A:B) $\mapsto$ B}
\end{itemize}
and define `null' so that (${\rm null}:X$=$()$) for all $X$.
\begin{itemize}
\item{$\delta_{\rm null}$ : ({\rm null} A:B) $\mapsto$ ()}
\end{itemize}
It is convenient to define ``getting the A and B components of a coda'' like so
\begin{itemize}
\item{$\delta_{\rm left}$ : ({\rm left} A:B) $\mapsto$ A} 
\item{$\delta_{\rm right}$ : ({\rm right} A:B) $\mapsto$ B} 
\end{itemize}
We proceed to define the minimal combinatorics involving of A and B as finite sequences.  These don't have 
commonly understood names, so we introduce new names.  For example the name `ap' is meant to suggest the 
concept ``apply A to each atom of B" is expressed by the definition.  
\begin{itemize}
\item{$\delta_{\rm ap}$ : ({\rm ap} A : b B) $\mapsto$ (A:b) ({\rm ap} A : B)}
\item{$\delta_{\rm ap}$ : ({\rm ap} A : ()) $\mapsto$ ()} 
\end{itemize} 
The domain of $\delta_{\rm ap}$ is defined with the assumption that $b$ is an atom, so the domain of the first branch of $\delta_{\rm ap}$ is exactly 
codas $({\rm ap}\ A,B)$ where $B$ starts with an atom.  By definition, if we list multiple ``branches'' as in this case, the first branch in the domain applies.  

\subsection{Language} 

Axiomatic systems like ZFC\cite{ZFC} or dependent type theories\cite{types,hott} are formal languages.  There is an alphabet, special symbols, and syntax rules 
for defining valid expressions.  Our approach avoids this by defining a language as just one more definition like any other.  The basic idea is to 
define minimalist language, mainly giving access to the two foundational operations via text blank space (for concatenation) and text colon (for pairing to data as a coda).  So, the idea is 
\begin{itemize}
\item {$\delta_{\{\}}$ : (\{x\ y\} A : B) $\mapsto$ (\{x\} A:B) (\{y\} A:B)}
\item {$\delta_{\{\}}$ : (\{x:y\} A : B) $\mapsto$ (\{x\} A:B) : (\{y\} A:B)}
\end{itemize}
where x and y are byte sequences as defined above.  As written, this is ambiguous since $x$ and $y$ may contain space and colon characters, but 
these ambiguities can be resolved just by choosing and order, thus forcing the language into one definition 
$\delta_{\{\}}$.  This definition includes 
\begin{itemize}
\item {$\delta_{\{\}}$ : (\{A\} A : B) $\mapsto$ A}
\item {$\delta_{\{\}}$ : (\{B\} A : B) $\mapsto$ B}
\end{itemize}
so that A and B have special meaning in the language.  As a result, for example, (\{B\}:1 2 3)=1 2 3, (\{B B\}:1 2 3)=1 2 3 1 2 3, and (\{A B\} a b : 1 2)=a b 1 2.  

     Given a language expression in the form of byte sequence data $L$, the corresponding data is, simply, ($L$:).  Every sequence of bytes is a valid language 
expression, so there is actually no need to define or check for syntax errors.  Similarly, an {\it evaluation} of ($L$:) is merely some sequence 
($L$:)=$D_1$=$D_2$=$\dots$=$D_n$ producing an ``answer'' $D_n$.  Typically, this is done with a simple strategy based on applying $\delta$ whenever 
possible or up to time or space limits.  This also means that there is also no such thing as a ``run time error.''    

\subsection{System} 

    The most basic question to ask about data is whether it is empty or not.  This is captured by the definition $\delta_{\rm bool}$, defined 
so that (bool:B) is () if B is empty (``true'') and (bool:B)=(:) if B is atomic (``false'').
\begin{itemize}
\item {$\delta_{\rm bool}$ : ({\rm bool} A : B) $\mapsto$ () if $B$ is empty, (:) if $B$ is atomic.}
\item {$\delta_{\rm not}$ : ({\rm not} A : B) $\mapsto$ (:) if $B$ is empty, () if $B$ is atomic.} 
\end{itemize} 
The equality defined by the context $\delta$ is available within $\delta$ via the following definition:
\begin{itemize}
\item {$\delta_{=}$ : (= A : ()) $\mapsto$ A}
\item {$\delta_{=}$ : (= (): A) $\mapsto$ A}
\end{itemize}
and if a and b are atoms, 
\begin{itemize}
\item {$\delta_{=}$ : (= a A : b B) $\mapsto$ (= a:b) (=A:B)}
\item {$\delta_{=}$ : (= A a : B b) $\mapsto$ (=A:B) (= a:b)}
\end{itemize}
so, if $(A:B)$ is empty, $A$ and $B$ are ``always'' equal and if $(A:B)$ is atomic, $A$ and $B$ are ``never'' equal.  The full language has a little bit 
of syntactic sugar, so one can write (A=B) instead of (= A:B) [ref]. 

New definitions can also be added to context via  
\begin{itemize}
\item {$\delta_{def}$ : ({\rm def} \texttt{name} A : B} $\mapsto$ ())
\end{itemize}
which is defined to be in domain as a partial function if \texttt{name} is not already in the current context, and, in that case, it adds the following definition
to context.
\begin{itemize}
\item {$\delta_{\texttt{name}}$:(\texttt{name}\ $A'$:$B'$) $\mapsto$ ($B\ A'$:$B'$)}
\end{itemize}
so, for instance (def first2 : {first 2 : B}), means that (first2 : a b c d) is interpreted as (\{first 2:B\} : a b c d) which is equal to the data (a b).   

We take this as the ``organic'' base of naturally occurring 
definitions plus the language will be a starting point in searching for the ``spaces'' defined in section 5.  Further definitions used in the text can be found in 
the Glossary.  Examples, tutorials, a complete definition of the language, and software can be 
found in reference \cite{coda}. 

\section{Global Structure} 

    Pure data has a global structure in the sense that the foundational operations $(A\ B)$ and $(A:B)$ are defined for any data $A$, $B$.  
This suggests defining a corresponding associative product $(A\cdot B)$ and associative sum $(A+B)$ by    
\begin{equation}
(A \cdot B):X = A:B:X 
\end{equation}
\begin{equation}
(A+B):X = (A:X)\ (B:X) 
\end{equation}
for all data $X$. This product distributes over the sum, from the right only, so we have 
\begin{equation}
(A+B)\cdot C=(A\cdot C)+(B\cdot C) 
\end{equation}
for any data $A$, $B$, and $C$.
Since $({\rm pass}\cdot A)$=$(A\cdot {\rm pass})$=$A$ and $(A+{\rm null})$=$({\rm null}+A)$=$A$ for any $A$, (pass) and (null) are the units of data composition and 
data addition respectively.  
Data $A$ is {\it idempotent} if $A\cdot A$=$A$, is an {\it involution} if $A\cdot A$=$1$, and $A$ has an {\it inverse} if $A\cdot A'$=$A'\cdot A$=$1$ for some data $A'$.   
A product ($A_n\cdot A_{n-1}\dots A_2\cdot A_1$) {\it starts with} $A_1$ and {\it ends with} $A_n$.   
Data $A$ is {\it algebraic} or {\it commutative} if $A:X\ Y$=$A:Y\ X$ for all $X$, $Y$, and is 
{\it distributive} if $A:X\ Y$=$(A:X)\ (A:Y)$ for all $X$ and for all $Y$.  
Each data $A$ defines a binary operator on data via ($X{\overset A\ast}Y)\mapsto (A:X\ Y)$.  Since ${\overset A\ast}$ is associative if and only if 
 ($A$:$X$\ $Y$)=($A$:($A$:$X$)\ $Y$)=($A$:$X$ ($A$:$Y$)) for all $X$,$Y$ we say that data $A$ is {\it associative} if it has this property.  
 
Algebraic and distributive data foreshadow a theme where these two properties make data mathematically interesting for complementary reasons.  
Algebraic data has, in a sense, ``transcended finite sequence," the foundational concept of the system, and deserves a platonic existence in 
that sense.  Distributive data, on the other hand, is maximally sequence dependent, but it may also be mathematically interesting because  
such data are ``morphisms of finite sequence.''   
 
\section{Spaces}

      A basic requirement for useful mathematics is to have a way to define and refer to collections.  
Within the framework of pure data, however, there is only one ``substance'' to work with.  All mathematical objects are data and 
all collections of mathematical objects are also data.  
Thus, we are lead to ask, given some data $S$, how could $S$ represent a collection of other data?  
There are a couple of plausible ways to do this.  
\begin{itemize} 
\item The collection corresponding to $S$ could be the collection of data ($S$:$X$) for any data $X$;
\item The collection corresponding to $S$ could be the collection of fixed points of $S$.  
\end{itemize}
These ideas coincide if we require $S$ to be idempotent, so that ($S$:$X$) is automatically a fixed point.  It is natural to also expect compatibility with 
sequences in the sense that if ($S$:$X$) is in the collection and ($S$:$Y$) is in the collection, then ($S$:$X$) ($S$:$Y$) is also in the collection.  
This is guaranteed if we require 
\begin{equation}
(S : X\ Y) = S : (S:X)\ (S:Y)
\end{equation}
for all data $X$, $Y$.  
Note that data $S$ is idempotent and satisfies Equation 5 if and only if $S$ is associative, and so we are lead 
to a simple definition.
\begin{definition} Data $S$ is a {\bf space} if $S$ is associative.
\end{definition}
\noindent If $S$ is a space, we say that any data ($S$:$X$) is {\it in} $S$.  
The data ($S$:) is called the {\it neutral data} of $S$.  

Examples of spaces can be found in the basic definitions, including `pass', `null' and `bool'.  More examples come from 
noting that any data which is both idempotent and distributive is a space.  Thus, if $J$ is idempotent, then (ap\ $J$) is idempotent and 
distributive, and is, therefore, a space.  If spaces $S$ and $T$ commute, then $S\cdot T$ is a space.    

\subsection{Morphisms} 

A product $F$ that starts with space $S$ and ends with space $T$ is a {\it morphism} from $S$ to $T$ and can be written $S{\overset F\longrightarrow}T$.
Since spaces are idempotent, $F$=($F\cdot S$)=($T\cdot F$).  Since $T\cdot S$ is a morphism from $S$ to $T$, morphisms always exist.  
Since the product is associative, morphisms can be composed, so that if $S{\overset F\longrightarrow}T$ and 
$T{\overset G\longrightarrow}U$, then $S{\overset {G\cdot F}\longrightarrow}U$.  
If $F$ and $G$ are morphisms from $S$ to $T$, we can also define a {\it sum of morphisms} by $F \oplus G = T\cdot (F+G)$, so that 
$S{\overset {F\oplus G}\longrightarrow}T$.  A morphism from a space to itself is an {\it endomorphism}. 

    Since $T\cdot X\cdot S$ is a morphism from $S$ to $T$ for any data $X$, a morphism can define any function.  Special morphisms which 
``preserve the structure'' of $S$ and $T$ are defined as follows.  A morphism $S{\overset H\longrightarrow}T$ is a {\it homomorphism} if 
\begin{equation}
H\cdot (f\oplus_S g) = (H\cdot f) \oplus_T (H\cdot g) 
\end{equation}
for all endomorphisms $f$ and $g$ of $S$.  Compare with morphisms that happen to also be a space.  Morphism $S{\overset U\longrightarrow}T$ is a space if and only if 
\begin{equation}
U\cdot (f\oplus_S g) = U\cdot ((U\cdot f) \oplus_T (U\cdot g)) 
\end{equation}
for all endomorphisms $f$ and $g$ of $S$.  Thus, any idempotent homomorphism is a space.  
Note that homomorphisms are closed under composition, but the composition of two spaces is guaranteed to be a space 
only if the spaces commute.  

\subsection{Spaces are Semirings} 

     Consider the endomorphisms of a fixed space $S$.  Since $S$ itself is a product starting at $S$ and ending at $S$, $S$ qualifies as an endomorphism of itself.  	
 Since ($S\cdot f$)=($f\cdot S$)=$f$ for any endomorphism $f$, $S$ is ``its own identity morphism.''  The endomorphisms 
of $S$ are closed under both composition and addition, with addition defined by $f\oplus g$ = $S\cdot(f+g)\cdot S$ = $S\cdot(f+g)$.  
Thus, the endomorphisms of $S$ are a `semiring' with composition and addition and with $1$=($S\cdot {\rm pass}\cdot S$)=$S$ and with 
$0$=($S\cdot {\rm null}\cdot S$)=($S\cdot {\rm null}$).  

\begin{definition}
A {\bf semiring} is a set with associative addition and multiplication (denoted $f\oplus g$ and $f\cdot g$), with additive and 
multiplicative identities {\bf 0} and {\bf 1} satisfying $f\oplus 0$=$f$=$0\oplus f$ and $f\cdot 1$=$f$=$1\cdot f$ for all $f$ in the semiring, and where 
$(f\oplus g)\cdot h$=$(f\cdot h)\oplus (g\cdot h)$ for all $f$, $g$, and $h$ in the semiring.  
A subset of a semiring $S$ which contains 0 and 1 and is closed under addition and multiplication is a {\bf subsemiring} of $S$.   
\end{definition} 
\noindent 
Note that the semiring of space $S$ has 1=0 if and only if $S$ is a constant space equal to (const $K$) for some data $K$.  
It is easy to verify that the semiring of the space (pass) is exactly the the global algebra defined in Section 5.  Since isomorphic spaces have 
isomorphic semirings, understanding the semiring of a space is mainly what we aim for in the examples which follow.  Before going to examples, however, it is helpful 
to identify semiring features of interest. 

Consider the semiring of a fixed space $S$.  
\begin{itemize}

\item{{\bf Subspaces.} If an endomorphism $s$ of $S$ happens to also be a space, we say that $s$ is a {\it subspace} of $S$.  The name ``subspace'' is justified because every data contained in $s$ is also contained in $S$, and every endomorphism $s\cdot$X$\cdot s$ of $s$ is also an endomorphism of $S$.  
Subspaces partially distribute from the left as with $s\cdot(f\oplus g)$=$s\cdot((s\cdot f)\oplus(s\cdot g))$.
Every space has subspaces $1$ (the whole space), and $0$, the constant neutral space.  If subspace $s$ is not equal to $S$, $s$ is a {\it proper} subspace.} 

\item{{\bf Constants.}  An endomorphism $k$ of $S$ satisfying $k\cdot 0$=$k$ is a {\bf constant} subspace of $S$.  There is one such constant for each 
data in $S$.  Since $k\cdot f$ and $k\oplus l$ are constants for 
constant $k$ and $l$ and for any endomorphism $f$, the constants of $S$ are a left-ideal of the semiring of $S$.} 

\item{{\bf Homomorphisms.}  By Equation 6, the homomorphisms of $S$ are exactly the endomorphisms which distribute over addition from
both the left and from the right.  The endomorphisms 0 and 1 are homomorphisms.  If $f$ and $g$ are homomorphisms of $S$, then 
$f\cdot g$ is a homomorphism.  If $S$ is algebraic, the homomorphisms are a subsemiring of the semiring of $S$.} 

\item {{\bf Group of units}. The collection of endomorphisms with multiplicative inverses is called the {\it group of units} of the space $S$.} 

\item{{\bf Central endomorphisms}.  Endomorphisms which commute with the group of units are {\it central endomorphisms}.  The endomorphisms 0 and 1 
are central.  If $f$ and $g$ are central endomorphisms, then $f\cdot g$ is a central endomorphism.  If $f$ and $g$ are {\it central homomorphisms}, then 
$f\oplus g$ is also a central homomorphism.  Thus, the central homomorphisms of $S$ are a subsemiring of the semiring of $S$.}

\item{{\bf Idempotents}.  Define a relation ${\overset i \leq}$ on the endomorphisms of $S$ by $f{\overset i \leq}g$ if ($f\cdot g$)=$f$.  The equivalence 
classes of the corresponding symmetric relation $f{\overset i \sim} g$ contain all the idempotent endomorphisms, since $f{\overset i\sim}g$ implies that 
both $f$ and $g$ are idempotent.  In this {\it idempotent order}, 1 is the maximum idempotent and the constant idempotents are minimal and equivalent to each other.}

\item{{\bf Neutral spaces}. If addition in $S$ allows left and right cancellation ($f\oplus g=f\oplus h$ implies $g=h$ and $g\oplus f=h\oplus f$ implies $g=h$), 
$S$ is a {\it neutral} space.  Neutrality of $S$ guarantees $h\cdot 0$=0 for homomorphism $h$, and guarantees that if $f\oplus f$=$f$, then $f=0$.} 

\item{{\bf Semilattice spaces}.  If $S$ is algebraic and $f\oplus f$=$f$ for all endomorphisms $f$, then $S$ is called a {\it semilattice space}.}

\end{itemize} 
In the case where $S$ is distributive, $(f\oplus g)$ = $(f+g)$ for $f$, $g$ endomorphisms of $S$.  This means that the homomorphisms of $S$ are 
just the distributive endomorphisms of $S$.  If the group of units of $S$ includes permutations, then the central homomorphisms are algebraic.  This illustrates 
the generalization of the earlier remark about algebraic and distributive data being mathematically interesting for complementary reasons.  Central 
endomorphisms are the generalization of algebraic data and homomorphisms are the generalization of distributive data.  
This suggests that the central homomorphisms of a space will be maximally mathematically meaningful.

\subsection{Isomorphic Spaces have Isomorphic Semirings} 

The fact that homomorphisms compose suggests adapting definitions from Category Theory.  
We say, for instance, that spaces $S$ and $T$ are {\it isomorphic} if the diagram of Figure 1 commutes for homomorphisms 
$h$ and $h'$.  
\begin{figure}[h]
\centering
\includegraphics[width=1.0\textwidth]{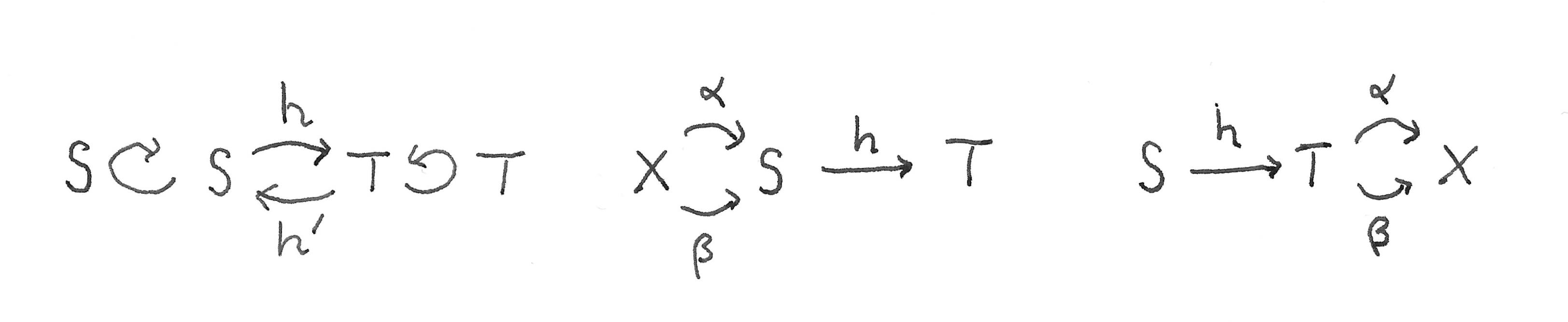}
\caption{{\it Since homomorphisms compose, standard definitions can be adopted from Category Theory where each space is its own ``identity morphism.'' 
$S$ and $T$ are isomorphic if the diagram on the left commutes for homomorphisms $h$ and $h'$.  A homomorphism $h$ is a monomorphism if the center 
diagram implies $\alpha=\beta$ and $h$ is an epimorphism if the right diagram implies $\alpha=\beta$.}}  
\end{figure}
where we have used $S$ and $T$ instead of respective ``identity morphisms."  Given an isomorphism as in Figure 1, the bijection $f\mapsto h'\cdot f \cdot h$ is 
a semiring isomorphism, preserving product, sum, 1 and 0, so that isomorphic spaces have the same features defined in the previous section, since 
these are all defined with semiring operations.  A homomorphism $h$ where $h\cdot \alpha=h\cdot \beta$ implies $\alpha=\beta$ is a {\it monomorphism}, and if $\alpha\cdot h=\beta\cdot h$ implies $\alpha=\beta$, then $h$ is an {\it epimorphism} \cite{categories}.  

\subsection{Spaces can be Semialgebras} 

The semiring of a space $S$ is close to the structure of an ``algebra'' in standard mathematics.  There is a binary operation $f\oplus g$, so 
constants of the space can be added.  Homomorphisms have the property that they map constants to constants and are distributive with 
$h\cdot (k\oplus l)$ equal to $(h\cdot k)\oplus (h\cdot l)$, so $S$ is almost an algebra with a distributive product.  
This suggests a definition. 

\begin{definition}
A space with a chosen mapping from constants to homomorphisms is a {\bf semialgebra}.  If the homomorphisms 
in the image of the mapping are central, the semialgebra is a {\bf central semialgebra}.  
\end{definition}
\noindent Since pure data is meant to include all mathematical objects, we might expect that important algebras of standard 
mathematics may appear as semialgebras or central semialgebras. 

\subsection{Subspaces are Substructures} 

Suppose that space $S$ has subspace $s$ and space $T$ has subspace $t$, and suppose that $F$=$t\cdot X\cdot s$ is a morphism from $s$ to $t$.   
Since ($t\cdot X\cdot s$) = ($T\cdot t \cdot X\cdot s\cdot S$), 
$F$ is also a morphism from $S$ to $T$, with the extra property that $F$ respects the structure of $s$ and $t$ in the sense that $t\cdot F=F\cdot s$.  
If a space has multiple subspaces, each commuting pair of subspaces is a new subspace with compatible structure.  

     Consider a subspace $s$ of a space $S$.  If a homomorphism $h$ of $S$ is also an endomorphism of $s$, then $h$ is also a 
homomorphism of the subspace $s$.  Similarly, if a unit $u$ of $S$ is an endomorphisms of $s$, then it is also a unit of $s$.  Such 
homomorphisms and units {\it descend} into the subspace $s$.  

       For any morphism $S{\overset h \longrightarrow}T$, there is a guaranteed factorization  
\begin{equation}
h = m \cdot e
\end{equation}
where $m$ is a monomorphism and $e$ is an idempotent of $S$.  If $h$ is a homomorphism, we can substitute this into the 
definition of a homomorphism, use $h\cdot e=h$, and, cancelling $m$ on the left, find 
\begin{equation}
e\cdot (f\oplus g) = e\cdot (e\cdot f \oplus e\cdot g),
\end{equation}
and conclude that $e$ is a subspace of $S$, which we call a {\it quotient} of $S$, and denote $e$ by $S/h$.  Note that $S/h=S$ if and only if $h$ is a unit as well as a homomorphism. 

\subsection{Fields}

A descending sequence of subspaces can only end with a constant space, since constant spaces are the only spaces with 
no proper subspaces.  Constant spaces have no variety - they are all isomorphic. However, there is an interesting 
variety one step above constants. These are the `fields.' 
\begin{definition}
A space where all proper subspaces are constants is a {\bf field}.
\end{definition}
\noindent In classical ring theory\cite{ring}, a field guarantees multiplicative inverses.  
An analogous result for spaces nicely combines homomorphisms, constants, units and subspaces.  
\begin{theorem}
A space $S$ is a field if and only if every non-constant homomorphism of $S$ is a unit. 
\end{theorem}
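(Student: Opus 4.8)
The plan is to prove both directions using the factorization $h = m\cdot e$ from Equation 8 together with the characterization of subspaces as quotients (Section 6.6) and the definition of a field (all proper subspaces are constants).

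First I would prove the forward direction: assume $S$ is a field and let $h$ be a non-constant homomorphism of $S$. Factor $h = m\cdot e$ where $m$ is a monomorphism and $e = S/h$ is a subspace of $S$. Since $S$ is a field, $e$ is either $S$ itself or a constant. If $e$ were a constant subspace, then $h = m\cdot e$ would factor through a constant space, forcing $h$ to be constant (since $e\cdot 0 = e$ gives $h\cdot 0 = m\cdot e\cdot 0 = m\cdot e = h$), contradicting our assumption. Hence $e = S$, i.e.\ $S/h = S$, and by the remark at the end of Section 6.6 this happens if and only if $h$ is a unit. So every non-constant homomorphism is a unit.

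For the converse, I would argue contrapositively: suppose $S$ is \emph{not} a field, so it has a proper subspace $s$ that is not constant. The idea is to manufacture from $s$ a non-constant homomorphism of $S$ that fails to be a unit. A subspace $s$ is an idempotent endomorphism, but it need not itself be a homomorphism, so the subtle point is producing a genuine homomorphism. The natural candidate is to compose $s$ with the structure it inherits, or to exhibit a non-constant homomorphism whose quotient is exactly $s$; concretely, if $s\neq S$ then $s$ is not a unit (a unit homomorphism has $S/h = S$), so it suffices to show that when $s$ is a non-constant proper subspace there is an associated non-constant homomorphism $h$ with $S/h = s \neq S$. One route: use the inclusion-type morphism $S \xrightarrow{\;s\;} S$ and check that on homomorphisms the relevant distributivity (Equation 6) is forced by $s$ being a space together with algebraicity/centrality hypotheses implicit in the semiring setup; alternatively, observe that the homomorphisms form a subsemiring when $S$ is algebraic, and locate $s$ (or a homomorphism built from it) inside it.

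The main obstacle I anticipate is precisely this converse step: bridging the gap between ``proper non-constant subspace'' and ``non-constant non-unit homomorphism.'' The forward direction is essentially bookkeeping with the $m\cdot e$ factorization and the $S/h=S$ criterion, both already established. The converse requires either that every proper subspace can be realized as a quotient $S/h$ of some homomorphism $h$ (the natural expectation, given that quotients \emph{are} certain subspaces), or a direct construction; I would expect the paper to either invoke an earlier lemma guaranteeing that subspaces and quotients coincide in the relevant class, or to restrict attention to the semialgebra/algebraic setting where homomorphisms are plentiful. Pinning down exactly which subspaces arise as $S/h$ is where I expect the real work — and possibly a hidden hypothesis — to lie.
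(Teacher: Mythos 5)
Your forward direction is essentially the paper's: factor $h=m\cdot e$, note that a constant quotient would force $h$ itself to be constant, conclude $S/h=S$ since $S$ is a field, and invoke the remark of Section 6.6 that $S/h=S$ precisely when $h$ is a unit. That half is fine and, if anything, more explicit than the paper's one-line version.

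The converse is where you have a genuine gap, and you correctly diagnosed it yourself: a proper non-constant subspace $s$ is an idempotent endomorphism but need not be a homomorphism, so you cannot simply point at $s$. However, your proposed escape routes (extra algebraicity or centrality hypotheses, or a lemma identifying which subspaces arise as quotients $S/h$) are not what is needed, and the suspected ``hidden hypothesis'' does not exist. The paper closes the gap with a concrete construction: from the proper non-constant subspace $s$ it forms $\sigma = S\cdot \mathrm{ap}\ s\cdot S$. Because $\mathrm{ap}\ s$ is distributive (it applies $s$ atomwise, hence respects concatenation), $\sigma$ distributes over the sum of endomorphisms and satisfies Equation 6, so it is a homomorphism of $S$ with no further hypotheses on $S$; it is non-constant because $s$ is. Crucially, non-unit-ness is not established via quotients at all: since $\sigma:X$ lies in $s$ for every data $X$ and $s$ is a proper subspace, $\sigma$ cannot be invertible. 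So the bridge you were looking for is not ``every subspace is a quotient $S/h$'' but rather ``every subspace $s$ yields the homomorphism $S\cdot\mathrm{ap}\ s\cdot S$ whose image is confined to $s$,'' which settles the contrapositive directly.
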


\begin{proof} If $S$ is a field, every non-constant homomorphism $h$ has $S/h$=$S$ and, thus, is a unit.  Conversely, 
suppose that $S$ is not a field.  Then $S$ contains a proper non-constant subspace $s$ and $\sigma = S\cdot {\rm ap}\ s\cdot S$ is,
therefore, a non-constant homomorphism of $S$.  Since $\sigma:X$ is in $s$ for any data $X$, $\sigma$ is not a unit. 
\end{proof}
\noindent 
The obvious examples of fields are null, bool and any constant space.  In the examples, we will see familiar fields such 
as the natural numbers modulo a prime or the field of rational numbers.   

\section{Plan}

Although the motivation for the concept of a space is merely to define a general collection of data, and a space is merely any associative data, we 
find that spaces and their endomorphisms have a rich internal structure including subspaces, a group of ``symmetries,'' a class of endomorphisms which 
qualify as homomorphisms, and central endomorphisms which commute with the symmetries.  This is a common structure shared by all collections ranging 
from a space containing only empty data (null) to a space containing one atom (const (:)) to a space containing all mathematical objects (pass).  
The scope of these results means that pure data spaces can be thought of as a general point of view about mathematical objects, analogous to
sets with structure or Category Theory.  We highlight some of the theoretical differences before going on to examples.  
\begin{itemize}
\item{A set is defined by its contents, but a space is not.  Spaces (ap \{a\}) and (is a), for instance, contain the 
same data, are idempotent equivalent (ap \{a\})${\overset i\sim}$(is a), are isomorphic as spaces, and have 
the same semirings, but they are not the same space.  For example, ap \{a\}:(:) = a, but (is a):(:) = ().  Since every 
space contains its neutral data at least, it is not possible for a space to be empty.}
\item{Spaces have morphisms between them which compose associatively, but, unlike Category Theory, morphisms 
and spaces are ``made of the same substance.'' They are both merely data.  For instance, every data $A$ of pass is also 
a morphism pass$\cdot A\cdot$pass from pass to pass.  In contrast with categories and objects, a morphism 
$S{\overset {T\cdot S}\longrightarrow}T$ exists between any two spaces $S$ and $T$, so there are no closed categories 
where morphisms do not cross category boundaries.  Unlike Category Theory, a morphism can be morphisms between 
multiple spaces at once.  For instance, if $S$ and $T$ are commuting spaces, any fixed morphism $(S\cdot T){\overset F\rightarrow} U$ 
means that $F$ is a morphism from $S\cdot T$ to $U$, from $S$ to $U$ and is also a morphism from $T$ to $U$.}
\item{Category Theory is a theory of structures with corresponding morphisms.  In the theory of spaces, each space 
comes with both structure preserving ``homomorphisms'' and not-necessarily-structure-preserving morphisms, packaged together 
in a coherent semiring.} 
\item{In classical mathematics, one expects sets to have subsets, groups to have subgroups, rings to have ideals, and, in general, 
structures to have similar substructures.  Thus, it is not surprising that spaces can have subspaces, but unlike the classical situation, 
subspaces of a space can have completely different mathematical structures from the ambient space.  For example, we shall see that 
the space (is a b c d) has various free monoids, natural numbers, integers, the algebra of 4x4 natural number matrices and 
Gaussian Integers as different subspaces.  The extreme version of this is the space (pass) which contains {\it every} space as a subspace.  One can think of 
an algebraic space $S$ as having ``horizontal'' features coming from the subsemiring of homomorphisms and ``vertical'' features 
coming from general endomorphisms and subspaces.} 
\end{itemize}
Since this is a rich but unfamiliar viewpoint, the plan of the paper is to clarify the situation with examples, starting 
with foundational definitions which happen to be spaces (Figure 2) and proceeding ``organically''. 
The pure data view of spaces gives us a criteria for what is ``most organic'' and gives a way to systematically 
search for all spaces up to a specified data width and depth.  Although we mainly use the pure data perspective to 
choose spaces of interest, what we say about spaces will be put in semiring language whenever possible.  Aside from 
the organic and semiring themes, we will take special notice of algebraic data following the intuition that algebraic 
spaces are most likely to be mathematically interesting since they have, in a sense, earned a platonic meaning by 
transcending the foundational finite sequence structure of the system.  In semiring terms, we will thus focus on
the central homomorphisms of a space as the most likely features of interest. 

\begin{figure}[h]
\centering
\includegraphics[width=1.0\textwidth]{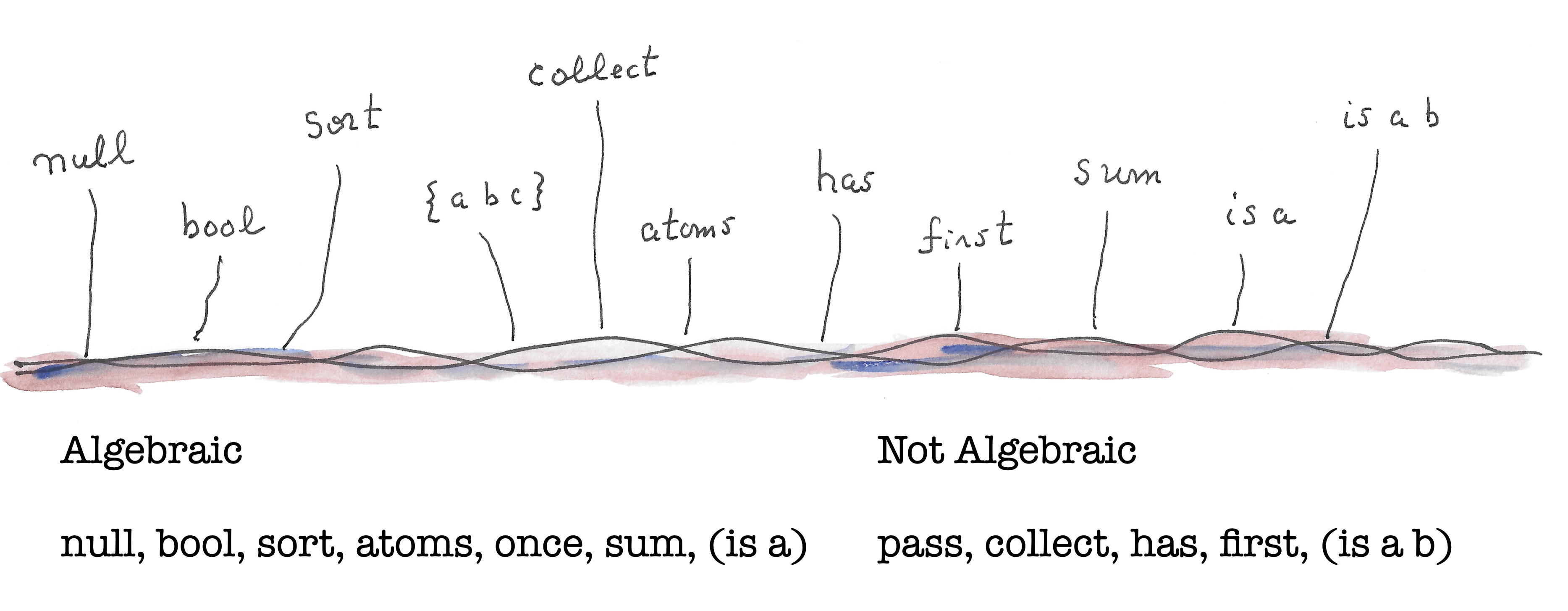}
\caption{{\it The simplest spaces emerging ``organically'' from the space (pass) of all pure data.  The algebraic spaces are commutative and, 
in a sense, transcend the foundational sequence concept, making them more ``platonic'' and more mathematically interesting than non-algebraic spaces.}} 
\end{figure}

\section{The Space of all data} 

     As a first example and for orientation, let's consider the space of all pure data.  Since (pass:X)=X for all data X, the distributive space (pass) 
contains all data.  Since spaces are fixed points on their contents, pass is the unique space containing all data.  Every data $X$ is both in pass
and is also an endomorphism of pass, since $X$ is equal to pass$\cdot X\cdot$pass.  Thus, the semiring of pass is also the collection of all data, with 
addition ($X\oplus Y$)=($X+Y$), and multiplication $(X\cdot Y)$.  The units of the semiring are $1$=pass$\cdot$pass$\cdot$pass=pass, 
and $0$=pass$\cdot$null$\cdot$pass=null respectively.   

Consider the properties defined in section 6.  
\begin{itemize}
\item{{\bf Subspaces.}  Since every data is an endomorphism of pass, every space is a subspace of pass.  The constants of pass are the data $K$ such that 
$K\cdot 0$=$K$, in other words, the constant spaces (const $K$) are the constants.}
\item{{\bf Homomorphisms.} Since pass is distributive, the homomorphisms of pass are the distributive data in pass, including spaces, such as (ap \{a\}) 
and non-spaces such as (ap \{B B\}).} 
\item{{\bf Group of units.} The units of pass are permutations.  The central constants of pass are the data which 
are invariant under permutations.  These are the sequences of identical atoms such as (:) (:) (:) or (a a a a a).  These are interpreted as ``organic natural numbers.''}
\item{{\bf Central endomorphisms} The maximally platonic, maximally significant content of pass are the central homomorphisms, which include 
subspaces such as (ap const a), (atoms), (is b).  These are isomorphic spaces, each representing natural numbers with natural number 
addition.  This, and the direct connection of bool with the underlying pure data context means that 
both the natural numbers and bool are natural starting points for exploration.} 
\item{{\bf Semialgebra} In pass, there is a bijection between the homomorphisms of pass and the constants of pass given by $X\leftrightarrow ({\rm ap}\ X)$.  Thus,
pass is a semialgebra as well as a semiring with $X\star (Y+Z)$ equal to $(X\star Y)+(X\star Z)$ where $X\star Y$ is defined to be (ap $X$)$\cdot Y$.}
\item{{\bf Idempotents} The idempotent equivalence classes of pass includes all idempotent data where (pass) is the unique maximum idempotent and where all constants are in the minimal equivalence class.}
\item{{\bf Semilattice} Semilattice spaces include (bool), (once) and (const $K$) for any data $K$.  The space bool is special in several ways.  
It is a semilattice, a field, a central endomorphism, and is isomorphic to any space with exactly two data.}
\end{itemize}

\begin{figure}[h]
\centering
\includegraphics[width=0.8\textwidth]{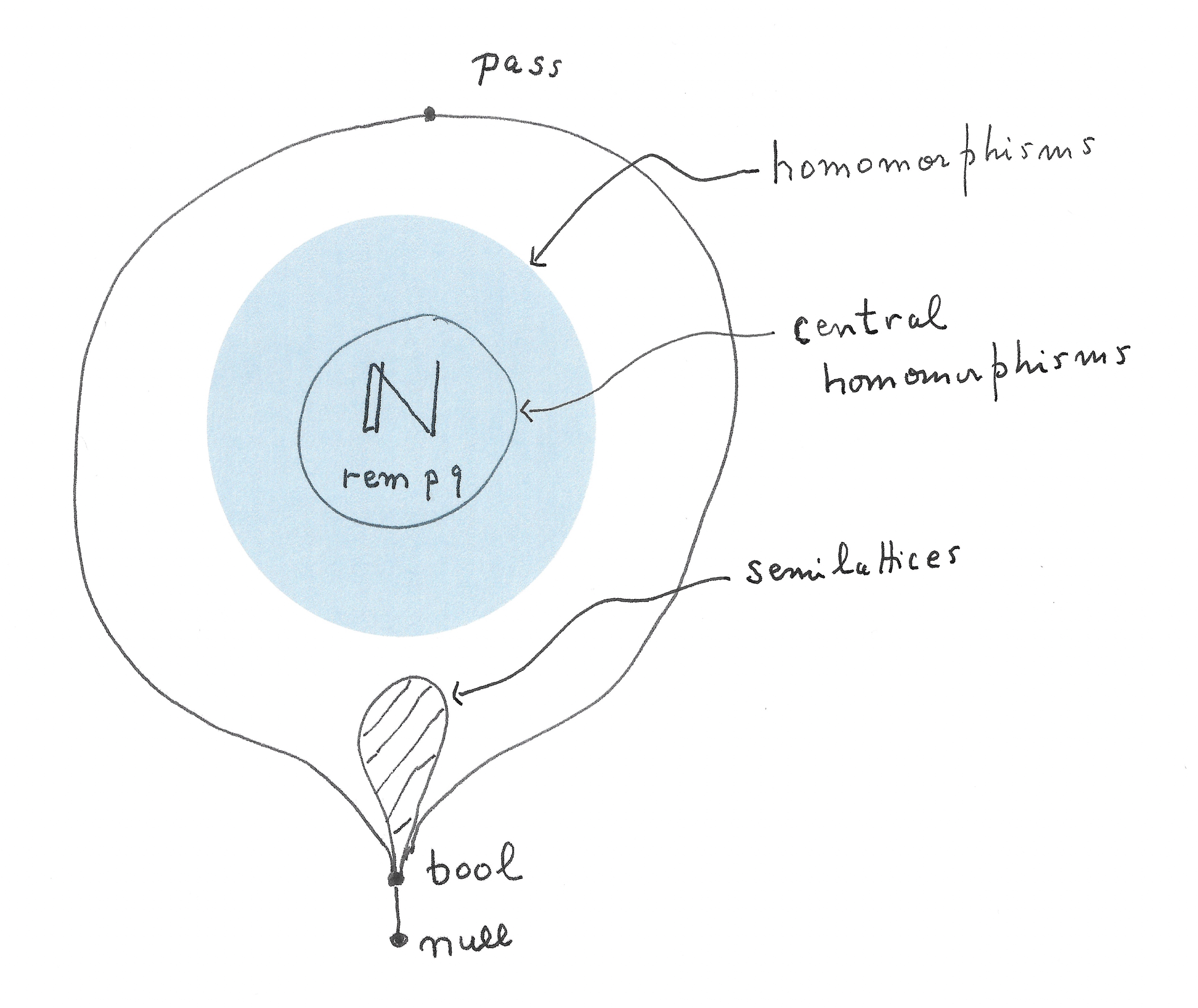}
\caption{{\it As a collection, pass contains all pure data; that is, all objects with mathematical 
meaning as well as all computations.  As a space, it has a specific structure which is typical of other examples in the text.  
In the idempotent order, pass is the unique maximum and null is minimal as are all constants.  
Each data in pass is also an endomorphism of pass.  The homomorphisms of pass are the distributive data.  
The central homomorphisms of pass are a single isomorphism class equivalent to the ``organic'' natural 
numbers.  Spaces that are fields includes bool, null, and the rem(p,p) subspaces of ${\mathbf N}$, equal to ${\mathbf F}_p$ for prime $p$. 
The space bool is the unique space with two data and is a semilattice as well as a field.}}
\end{figure}

In classical mathematics, the concept of a Set intentionally gives no information about its contents.  Thus, the set of all sets is vast, but 
not interesting.  In our case, (pass) is vast, but also has particular features which will be common to all of the spaces considered here. 
Figure 3 gives a visual overview of pass, highlighting the features of interest and points for exploration in the following sections. 

\section{Null}

    Null is the distributive algebraic space defined by (null:$X$)=() for all $X$.  Since null$\cdot X\cdot$null=null, null is the only endomorphism 
of null, and so 1=0 in the semiring, as is true for all other constant spaces.  Since $0\cdot(0\oplus 0)=(0\cdot 0) \oplus (0\cdot 0)=0$, the one 
endomorphism is a homomorphism as well as a subspace.   Null is minimal in the idempotent order. 

\section{Organic Numbers}

      We have seen that ``organic natural numbers'' appear as sequences of identical atoms where we identify (:) (:) (:) or (a a a) with ``3,'' just depending 
 on a conventional choice of atom.  The spaces where these numbers are data are the maximally platonic central homomorphisms of 
 pass.  For instance, the spaces (atoms), (ap const (:)), (ap const a), (ap \{\#\}), (is a) all are central homomorphisms and each represents natural numbers 
 with natural number addition.  Since these spaces are isomorphic and, therefore, have the same semirings, we can choose any one for investigation.  
 For convenience we will choose (is a) as the particular space to start with.  For convenience, denote $n$ concatenations of data $A$ by $A^n$ so so that 
 ``5'' represented by (a a a a a) and can be written as a$^5$.  
 The spaces (is a), (is a b), (is a b c),$\dots$ are {\it organic numbers}.  
      
\subsection{is a}     
     
     Start with $N$=(is a), the {\it organic natural numbers}.  
The first thing to notice about $N$ is that $N$ contains one data a$^n$ for each 
natural number $n$, and addition in $N$ is natural number addition, 
since ($N$:$X$\ $Y$) is the natural number sum of ($N$:$X$) and ($N$:$Y$).  Since homomorphisms are distributive, a homomorphism $h$ of $N$ is determined by ($h$:a), so that $h$ is natural number multiplication by some natural number, 
for instance, ($h$ = ap const a\ a\ a) is multiplication by $3$.  There is one such homomorphism for each data in $N$, and so $N$ is a semialgebra and 
is identical to the standard semiring of natural numbers with multiplication distributing over addition.
     
    Subspaces of $N$ are indicated by subspace endomorphisms.  As always, the subspace $1$ is the whole of $N$ and the subspace $0$ is the neutral data of $N$, which, 
since $N$ is distributive, is the empty sequence.  It is easy to guess some subspaces of $N$.
\begin{itemize} 
\item [1.] {\bf Saturation subspaces} An endomorphism (min a a) is a subspace which effectively computes the minimum of 2 and its argument.  Thus (min a a) contains the data (), (a), and (a a). 
\item [2.] {\bf Modular arithmetic subspaces} An endomorphism which removes three (a) atoms while possible, such as (while remove a a a) is a subspace which computes the sum modulo 3.  Note 
that (while remove a a a) contains the same data as (min a a), but have a different additive structures.  
\end{itemize}
These unsurprising subspaces have a slightly more surprising generalization.  For $p,q\ge 1$, let rem($p$,$q$) be the endomorphism defined by 
\begin{equation}
{\rm while}\ n\ge p,q: {\rm remove\ p\ from\ n}
\end{equation}
which results in rem($p$,$q$)($n$) = min($n$\ mod\ $p$,$q-1$), specializing to addition modulo p for rem(p,p).  Since a subspace of $N$ is exactly an idempotent 
endomorphism of the semigroup (${\mathbb N},+$), it is known from the theory of semigroups that this list is complete\cite{semigroup}.  The only subspaces of $N$ are $1$, constants and rem($p$,$q$) for $p,q\ge 1$, and the only field subspaces of $N$ are rem($p$,$p$) for prime $p$. 

\begin{figure}[h]
\centering
\includegraphics[width=0.7\textwidth]{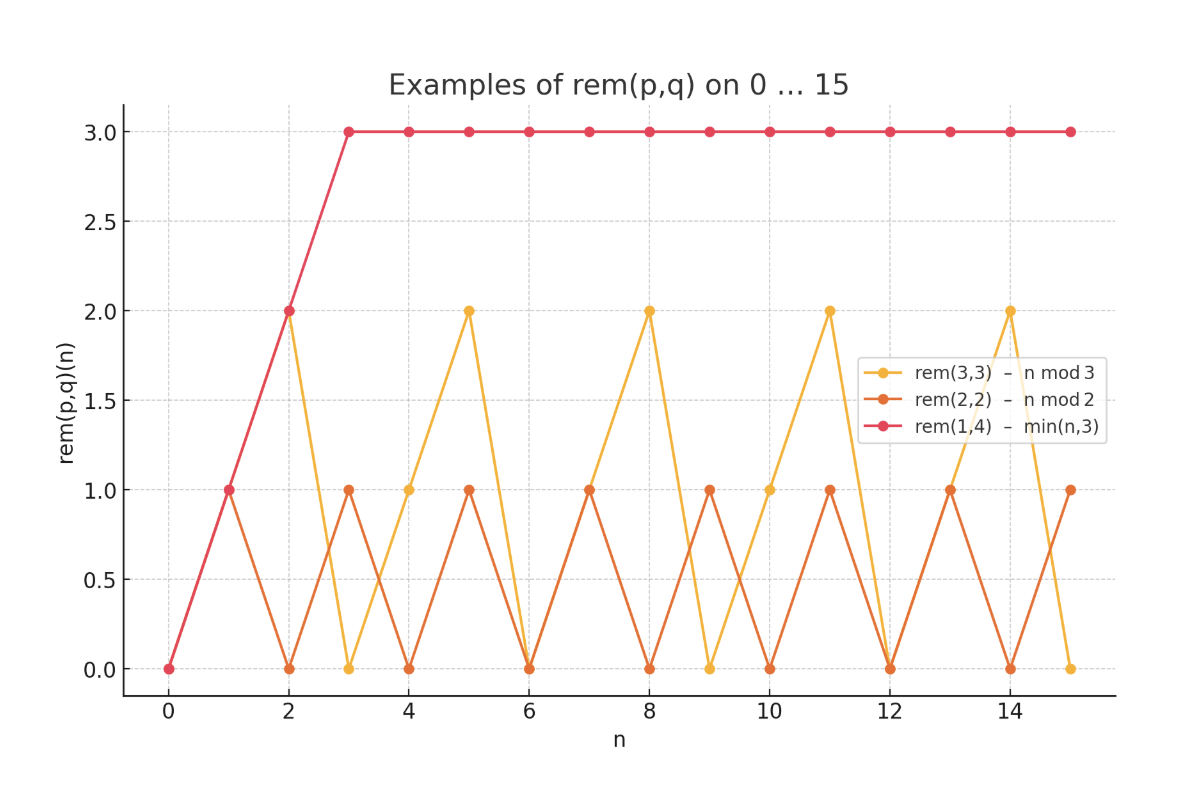}
\caption{{\it The subspaces of the organic natural numbers are $N$ itself, the constants subspaces, and the rem(p,q) subspaces, 
which specialize to natural numbers modulo $p$ and ceiling functions less than or equal to $q$.}} 
\end{figure}

\subsection{is a b} 

The second simplest organic number is (is a b), which we can refer to as N$_2$ for brevity.  Since N$_2$ is distributive, the homomorphisms of N$_2$ 
are the distributive endomorphisms.  Consider subspaces N$_2$.  

\begin{itemize}
\item{The subspaces 1 and 0 give the whole of N$_2$ and just the empty sequence as subspaces, respectively.  
Projections N$_2$$\cdot$(is a)$\cdot$N$_2$ and N$_2$$\cdot$(is b)$\cdot$N$_2$ produce two $N$-isomorphic subspaces. The absolute value 
subspace (ap const a) is also isomorphic to $N$.}

\item{Let `reduce' be the subspace of N$_2$ which removes (a b) and (b a) subsequences until saturation.  Thus, every data in reduce
is either a$^n$ or b$^n$ for some natural number $n$.  A homomorphism $h$ satisfies $h:X\ Y$ = reduce $:(h:X)\ (h:Y)$, so 
$h$ is determined by ($h$:a) and ($h$:b).  If $h$ is a central homomorphism, ($h$:b) is ($h$:a) followed by an 
a$\leftrightarrow$b swap.  Thus, if we interpret a$^n$ as the integer $n$ and b$^n$ as the integer $-n$, then applying $h$ 
is standard distributive integer multiplication by the integer ($h$:a).  Since there is one central homomorphism for each data in the subspace reduce, 
reduce is a central semialgebra isomorphic to the standard ring $\mathbf Z$ of integers.}

\item{If {\it sort} is the endomorphism which does lexical sorting of N$_2$ data, then sort is also a subspace of N$_2$ where the 
data of sort can be written $a^m b^n$ for $m,n\ge 0$.  As in the previous case, a homomorphism M of sort is defined by  
\begin{itemize}
\item [] $M: a\mapsto a^{m_{11}} b^{m_{21}}$, and 
\item [] $M: b\mapsto a^{m_{12}} b^{m_{22}}$ 
\end{itemize}
for some choice of 
$
\left (
\begin{array}{cc} 
m_{11} & m_{21} \\ m_{12} & m_{22}  
\end{array}
\right ) 
$
so the action of M is standard matrix multiplication.  Thus, the data of sort are pairs of natural numbers and the 
homomorphisms are 2x2 matrices with natural number entries.  The central homomorphisms of sort 
commute with the involution 
$
\left (
\begin{array}{cc} 
0 & 1 \\ 1 & 0 
\end{array}
\right ) 
$
and so the central homomorphisms are matrices 
$
\left (
\begin{array}{cc} 
m & n \\ n & m 
\end{array}
\right ) 
$
for natural numbers $m$ and $n$.  Thus, there is exactly one central homomorphism for each data in sort, 
and sort is, therefore, a central semialgebra, equivalent to the standard matrix semiring ${\rm Mat}_{2\times2}(\mathbf N)$.
}

\item{Lexical sort followed by reducing (b b) to (b) results in a subspace equivalent to $\mathbf N\times\{0,1\}$ with addition
defined by $(n,\alpha)+(m,\beta)$=$(n+m,\alpha\vee\beta)$.}

\item{Lexical sort followed by reducing $a^n b^m$ to $a^{(n/g)} b^{(m/g)}$ where $g$=GCD$(n,m)$ is the greatest common divisor.  
This is the space of positive rational numbers, but with non-standard `mediant' addition where $(n/m)+(n'/m')$ is $(n+n')/(m+m')$\cite{mediant}.}

\end{itemize}

\subsection{is a b c d...} 

     It is clear that in going to higher organic numbers, some of the features of $N$ and $N_2$ will repeat for 
 $N_n$=(is a b c...-{\it n atoms}-...).  $N_n$ sorted 
 will always result in a space with homomorphisms ${\rm Mat}_{n\times n}(\mathbf N)$.  Sorting followed by cancelling 
 atoms in pairs will result in ${\rm Mat}_{(n/2)\times (n/2)}(\mathbf Z)$.  For example, in (is a b c d) with subspace 
 (a b)=(b a)=(c d)=(d c)=(), homomorphisms are ${\rm Mat}_{2\times 2}(\mathbf Z)$.  Central homomorphisms of $N_4$ sorted and reduced 
 must commute with 
 $
J=
\left (
\begin{array}{cc} 
0 & -1 \\ 1 & 0 
\end{array}
\right ) 
$
and are thus matrices of the form 
$
\left (
\begin{array}{cc} 
x & -y \\ y & x 
\end{array}
\right ) 
$
and we have the central semialgebra equivalent of the Gaussian Integers ${\mathbf Z}[i]$.  
Figure 5 shows similar results for some of the more notable subspaces.  
Notice that this is a success from the organic gardening perspective.  Important mathematical 
structures are emerging quite naturally.  Even though both semiring operations 
are associative, a non-associative algebra such as the integer Octonions appears in the 
same way as the other spaces with homomorphisms implementing the non-associative octonion 
multiplication\cite{octonions}.  
 
\begin{figure}[h]
\centering
\includegraphics[width=1.1\textwidth]{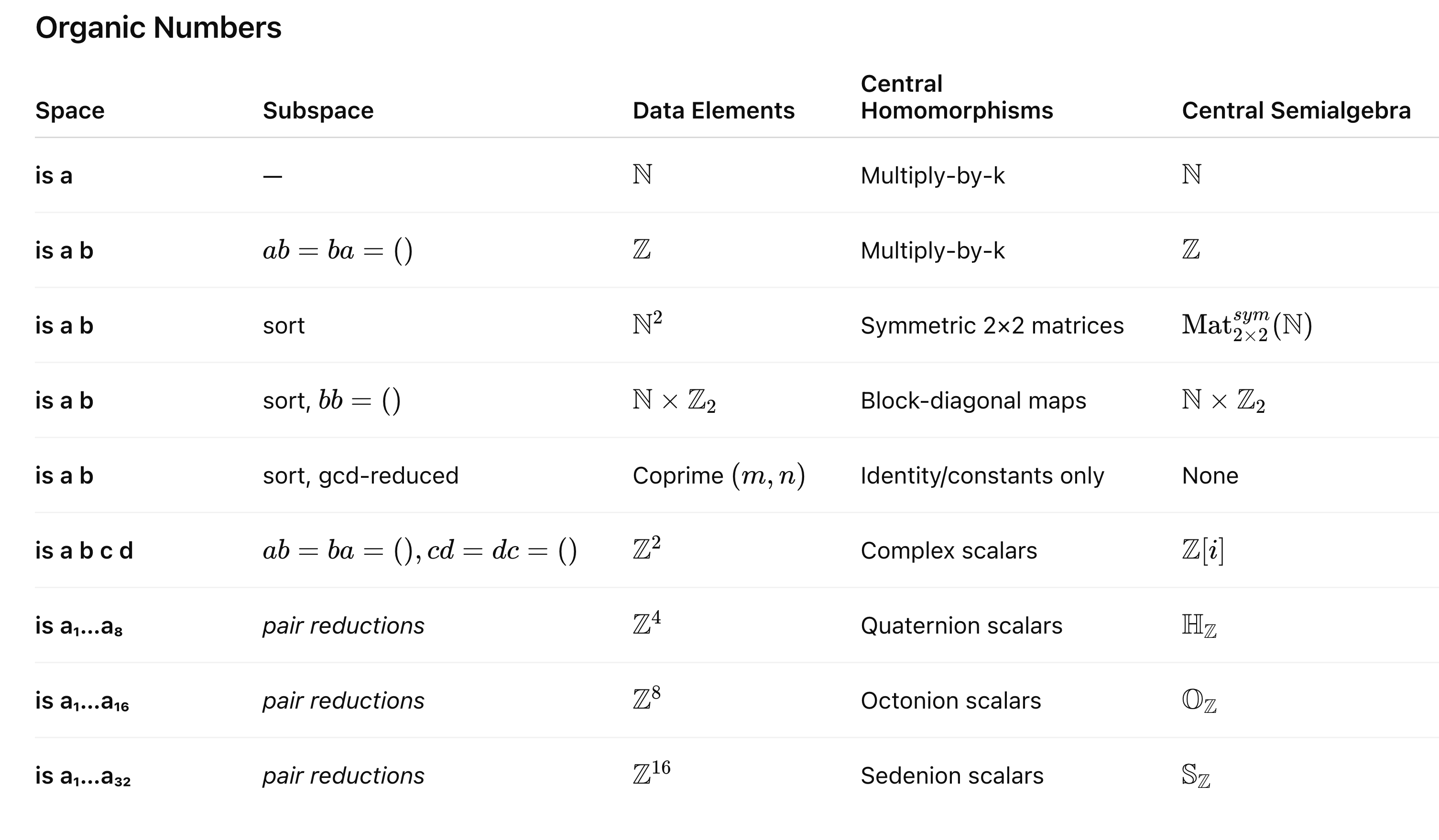}
\caption{{\it Classical algebras appear as organic numbers.}}
\end{figure}
\noindent Although many interesting spaces appear as organic numbers, some simple spaces appear to be missing.  It's noticeable, 
for instance, that standard rationals do not appear.  We would like to understand why.  

\section{Rationals} 

       In the naive version of the organic theme, we would define a context with the basic combinatoric definitions and then 
search the space of pure data, up to, say, some specified width and depth.  Then, for each data, we would test if it is 
a space or not.  Ideally, all of the most important mathematical structures would be found this way.  They would appear ``organically.'' 
This has actually worked fairly well up through the organic numbers (Figure 5), but there we noticed that only a mediant-addition 
version of the rational numbers appeared as an organic integer.  The theory of fields makes the reason for this clear.  
Any non-constant subspace of an organic number space (is a b c...) will have proper non-constant subspaces formed by 
projecting onto a subset of (a b c...).  This means that the only non-trivial fields in the organic numbers are ${\mathbf F}_p$, 
the subspaces rem($p$,$p$) for prime $p$ of (is a).  As a result, we take a more active design approach to creating a space that represents the standard rationals, while still aiming to be as organic as possible in the sense of only using basic combinatoric definitions.  For illustrative purposes, we follow a broadly applicable plan for defining a space corresponding to some algebra of interest.  The plan is as follows. 
 
 \begin{itemize}
 \item[(a)]{Define suitable atoms to represent the objects of interest.}
 \item[(b)]{Create a base space; typically a distributive sequence of atoms defined in (a).} 
 \item[(c)]{Invent a sequence of mutually commuting subspaces of the base space, until the sum in the product of the subspaces has 
 the desired sum in a mathematical sense.}
 \item[(d)]{The homomorphisms of the resulting product subspace then define a module-like product.  If there is a suitable mapping 
 from constants to homomorphisms, one has a semialgebra or a central semialgebra.}
 \end{itemize}
 In the case of rational natural numbers, we proceed as follows.
\begin{itemize} 
\item[(a)] { 
Represent single rational numbers with $q$-atoms such as ($q$ a a: a a a) representing `2/3.'  
It is convenient and illustrative to define a ``fancy'' $q$-atom that is not a direct fixed point, but is only equal to a fixed point. Define $q$ by 
\begin{itemize}
\item{($q$ $A$ $B$ : $B$) $\mapsto$ ($q$ $A$:$B$)}
\item{($q$ $A$ : $A$ $B$) $\mapsto$ ($q$ $A$:$B$)}
\end{itemize}
so that $A$ and $B$ components of each $q$-atom are coprime natural numbers. 
}
 \item[(b)] {For the base space, we can simply choose the distributive space of $q$-atoms, modified only by 
 filtering to guarantee that ``numerator'' and ``denominator'' are sequences of a-atoms, and excluding $q$-atoms with empty denominators. 
}
 \item[(c)] {Define mutually commuting subspaces of $Q$ approaching the desired properties of rationals. 
\begin{itemize}
\item{ADD is the subspace defined by $(q\ X_1:Y_1)\ (q\ X_2:Y_2)$ maps to 
$(q\ ({\rm pr}\ X_1:Y_2)\ ({\rm pr}\ X_2:Y_1) : ({\rm pr}\ Y_1:Y_2))$, where 
`pr' is (ap const a)$\cdot$ar, so that (pr $A$:$B$) is the natural number product of $A$ and $B$.}
\item{NORM is the subspace of $Q$ defined by ($q$\ $A$:$A$) $\mapsto$ (q\ a:a).}
\item{NORM0 is the subspace of $Q$ defined by ($q$\ ():$b\ B$) $\mapsto$ ().}  
\end{itemize}
The subspace S=(ADD$\cdot$NORM$\cdot$NORM0) thus contains the empty sequence (representing the zero rational) and a single q-atom (representing non-zero rationals).  The sum in $S$ is standard rational number addition in $\mathbb Q^+$.

 }
 \item[(d)] {If we define (product ($q\ X_1:Y_1$) : ($q\ X_2:Y_2$)) to be the standard rational product $(q\ ({\rm pr}\ X_1:X_2): ({\rm pr}\ Y_1:Y_2))$, then 
(ap product $R$) is a homomorphism of $S$ and is the standard distributive rational multiplication by rational number $R$, making  
$S$ is a central semialgebra and a field.}
 \end{itemize} 
 We find that the standard rational natural numbers do appear to have a representation in this framework which is natural in the sense that 
 it is a simple construction from basic combinatoric definitions.  It is, however, created via a bespoke intentional process steering towards 
 an intended result, rather than as a space appearing with essentially no effort. 
 
\section{Number Sequences}

      Given any space $S$, a common functor-like construction produces a corresponding distributive space of 
$S$-values stored in some chosen $s$-atom.  An example with the organic natural numbers will illustrate how this works. 
Given the organic naturals $N$=(is a), from Section 9.1, we can go from a space containing single natural numbers to a space 
containing sequences of natural numbers.  
Let $\mathbb{N}$ be 
\begin{equation}
{{\rm ap}\ ({\rm put}\ n)\cdot N \cdot ({\rm get}\ n)}
\end{equation}
where $n$ is a pre-defined atom.  Since $({\rm put}\ n)\cdot N \cdot ({\rm get}\ n)$ is idempotent, ${\mathbb N}$ is a distributive space containing 
sequences of $n$-atoms containing $N$-data, such as 
\begin{equation} 
T = (n:a\ a\ a)\ (n:)\ (n:a\ a)\ (n:a)\ (n:).
\end{equation} 
Unlike the case of $N$, the action of ${\mathbb N}$ is merely to concatenate sequences.  It's easy, however, to identify 
natural number sum again as one of several subspaces 
\begin{itemize}
\item sum:$T$ = ($n$:a a a a a a)
\item sort:$T$ = ($n$:) ($n$:) ($n$:a) ($n$:a a) ($n$:a a a)
\item min:$T$ = ($n$:)
\item first:$T$ = ($n$:a a a)
\end{itemize} 
All but the last are algebraic subspaces and, predictably, have recognizable names.  
Some endomorphisms of ${\mathbb N}$ are inherited from $N$ as follows.  Define inner:$F$ to be 
\begin{equation}
{\mathbb N}\cdot({\rm put}\ n)\cdot F\cdot({\rm get}\ n)\cdot{\mathbb N}
\end{equation}
so that (inner:$F$):$X$ means letting $F$ act on the concatenated $N$-contents of $X$, returning the result in a single $n$-atom.  The sum endomorphism 
above, for instance, is equal to inner:$N$.  The construction of Equation 11 works for any space, and so we can define a ``functor'' {\it Seq} to be 
\begin{equation}
{\rm ap}\ \{({\rm put}\ {\rm A})\cdot {\rm B} \cdot ({\rm get}\ {\rm A})\}.
\end{equation}
Then {$\mathbb N$} is equal to (Seq $n$:$N$) and given any atom $s$ and any space $S$, (Seq $s$:$S$) is the space of $S$-values stored in $s$-atoms.  
Similarly, if we define {\it inner} to be  
\begin{equation}
S\cdot\{({\rm put}\ s)\cdot{\rm B}\cdot({\rm get}\ s)\}\cdot S
\end{equation}
then inner:$f$ is the inner version of an endomorphism $f$ of $S$.  Many general constructions of this type are clearly possible.  For example, 
since the endomorphism last2 = (inner:last 2) sums the last two atoms of a data of ${\mathbb N}$, (series last2) generates the fibonacci sequence. 

\section{Sets}

    We have seen that familiar mathematical structures organically appear as pure data spaces rather than as sets with additional algebraic structure.  
It is natural to ask whether ordinary sets also appear organically.  

     Consider $S$=first$\cdot$(is a b c).  $S$ is a distributive space containing \{(),a,b,c\} as data.  Endomorphisms of $S$ are  
$S\cdot X\cdot S$ for some data $X$ and, thus, can be any ordinary function from \{(),a,b,c\} to \{(),a,b,c\}.  Homomorphisms of $S$ 
are endomorphisms which map () to ().  Thus, $S$ is, roughly speaking, the set \{a,b,c\} but with awkward complications due 
to the forced presence of ().  Thematically, we attribute this to the ``flaw'' that even though $S$ has the key property 
that $f\oplus f$=$f$ for all endomorphisms, $S$ is not algebraic and, thus, we expect it not to be perfectly mathematical.  

\begin{figure}[h]
\centering
\includegraphics[width=0.5\textwidth]{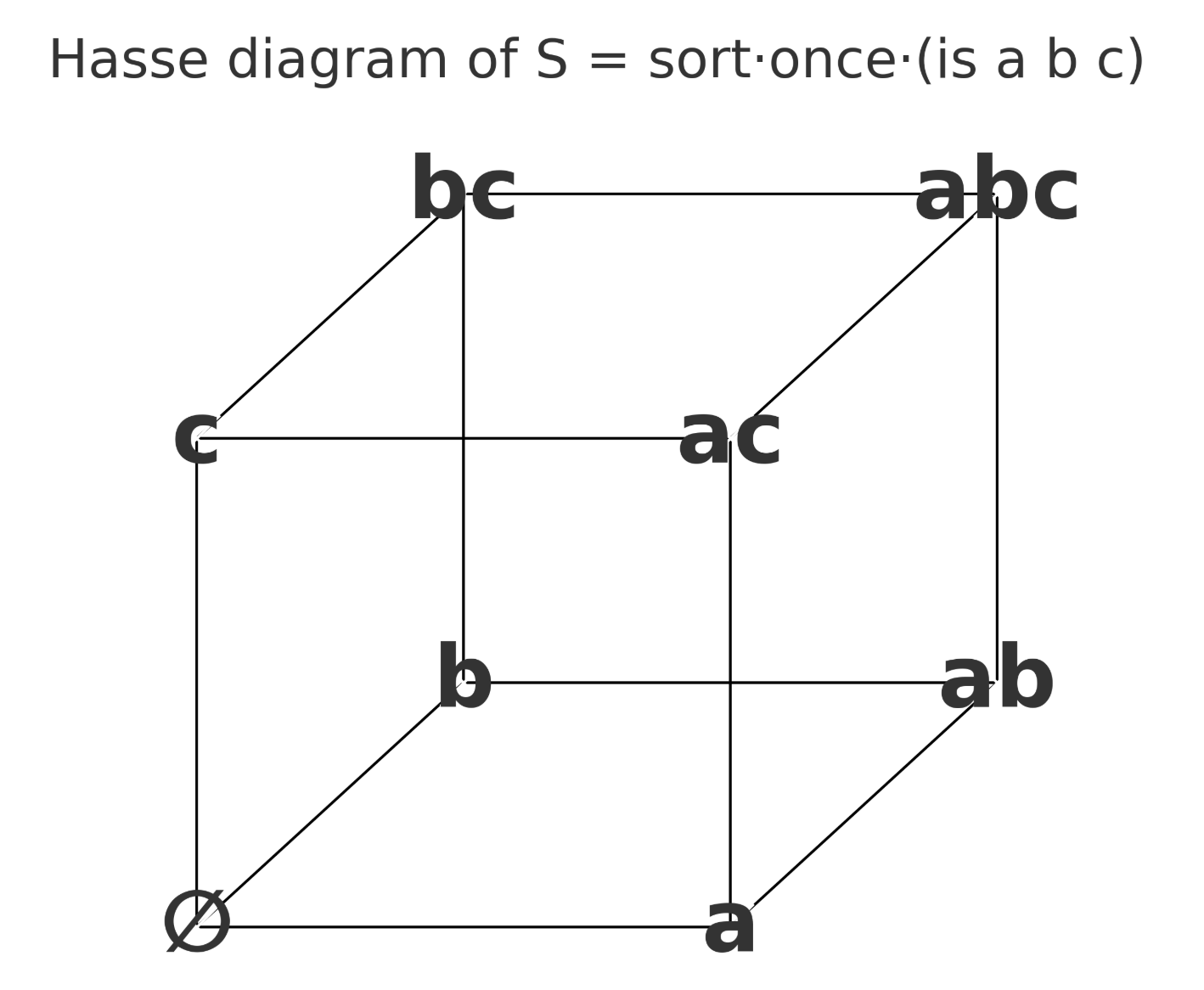}
\caption{{\it Hasse diagram of the constant endomorphisms of $S$=sort$\cdot$once$\cdot$(is a b c).  The data of $S$ are subsets 
of \{1,2,3\} ordered by inclusion.}}
\end{figure}

     The partial success of first$\cdot$(is a b c) suggests considering an {\it algebraic} space where $f\oplus f=f$.  
These are the {\it semilattice} spaces.  The space $S$=sort$\cdot$once$\cdot$(is a b c) is an example.  
Instead of containing the elements of \{a,b,c\}, $S$ now contains the $2^3$ subsets \{\{\}, \{a\}, \{b\}, \{c\}, \{a b\}, \{a c\}, \{b c\}, \{a b c\}\} 
represented as sorted non-repeating sequences.  The endomorphisms of $S$ are general endofunctions of ${\mathcal P}(\{a,b,c\})$.  The semilattice of 
constants of $S$ is the standard semilattice of subsets of \{a,b,c\} ordered by the set union operation.  
The unit endomorphisms are the permutation group $S_3$ permuting a, b and c.  The central endomorphisms are $S_3$-equivariant functions, and the homomorphisms 
are morphisms of the standard set union, as shown in Figure 6.  In other words, $S$ has everything that one mathematically expects, without extraneous features or 
unexpected constraints.  Thus, the intuition that algebraic spaces will be more platonic and mathematically perfect is encouraged.  

\section{Bool} 

      As mentioned already, the space (bool) is likely to be mathematically interesting, both because of its direct connection to 
the underlying pure data context and, because it has a number of special properties.  Although bool is not a homomorphism of pass,
it is algebraic, a central endomorphism, a semilattice, a field, and bool is the only space containing exactly two data. 
Because bool only contains:  () for {\it true} and (:) for {\it false}, there are only four endomorphisms:  
the identity ID=bool, two constants: TRUE=const (), and FALSE=const (:), and one involution, NOT=bool$\cdot$not$\cdot$bool.  
\begin{table}
\begin{tabular}{| l | l | l | l | l |  }
$f\cdot g$ & ID & TRUE & FALSE & NOT  \\
\hline
ID &  ID & TRUE & FALSE &  NOT \\
TRUE & TRUE & TRUE  & TRUE & TRUE \\
FALSE & FALSE  & FALSE & FALSE & FALSE   \\
NOT & NOT & FALSE & TRUE & ID \\
\hline
\end{tabular}
\begin{tabular}{| l | l | l | l | l |  }
$f\oplus g$ & ID & TRUE & FALSE & NOT  \\
\hline
ID &  ID & ID & FALSE & FALSE \\
TRUE & ID & TRUE  & FALSE & NOT \\
FALSE & FALSE  & FALSE & FALSE & FALSE   \\
NOT & FALSE & NOT & FALSE & NOT \\
\hline
\end{tabular}
\caption{{\it Product and sum of the four morphisms ID, TRUE, FALSE, NOT of the space bool.  ID is the unit of multiplication and TRUE is the 
unit of addition, and we have $f\oplus g=g\oplus f$ and $f\oplus f=f$ since bool is a semilattice space.}}
\end{table}
\begin{table}
\centering 
\begin{tabular}{r c c c c c c c r r l}
\hline\hline
$e_i \in {\mathbb L}_2$ & 0 & T & F & TT & TF & FT & FF & standard & generalized & description \\ [0.5ex] 
\hline
$e_1$  & T & T & T & T & T & T & T & TRUE & {\bf always} & always true \\
$e_2$  & T & T & T & T & T & T & F & OR & {\bf any} & any are true \\
$e_3$  & T & T & F & T & F & F & T & XNOR & {\bf even} & even (:)s \\
$e_4$ & T & T & F & T & F & F & F & AND & {\bf all} & all are true \\
$e_5$ & F & F & T & F & T & T & T & NAND & {\bf notall} & not all are true \\
$e_6$ & F & F & T & F & T & T & F & XOR & {\bf odd} & odd (:)s \\
$e_7$ & F & F & F & F & F & F & T & NOR & {\bf none} & none are true  \\
$e_8$ & F & F & F & F & F & F & F & FALSE & {\bf never} & never true \\
\hline
Number of (:)   & 0 & 0 & 1 & 0 & 1 & 1 & 2 &  \\ 
\hline
\end{tabular}
\caption{{\it The eight inner morphisms of ${\mathbb L}_2$ slightly generalize the eight standard symmetric binary boolean operators.
In this slightly generalized context, one can see that the standard names for these operators are not perfectly descriptive.}}
\label{table:L2}
\end{table} 
Table 1 shows the bool semiring.  The only proper subspaces of bool are the constants TRUE and FALSE, so bool is a field. The group of units is ID and 
the involution NOT, so ID is the only central endomorphism.    

\subsection{Boolean Sequences}

Repeating the sequencing operation from Section 12, we can let $\mathbb L$=(Seq $b$:bool), 
be the distributive space of bool-valued data stored in pre-defined $b$-atoms, so a typical data in $\mathbb L$ is 
\begin{equation}
(b:)\ (b:)\ (b:(:))\ (b:)\ (b:(:))\ (b:(:))
\end{equation}
We write such sequences replacing ($b$:) with T, ($b$:(:)) with F and the empty sequence with 0, so that the above is written TTFTFF.   
Consider the shortest subspaces of $\mathbb L$ first.  
The subspace (${\mathbb L}\cdot {\rm first}\cdot{\mathbb L}$) contains the $\mathbb L$ sequences of length less than or equal to 1 and similarly,  
(${\mathbb L}\cdot {\rm first\ 2}\cdot{\mathbb L}$) is the subspace of sequences with length less than or equal to 2.  Lets refer to these spaces as 
${\mathbb L}_1$ and ${\mathbb L}_2$ respectively.  
${\mathbb L}_1$ contains data \{0,T,F\} and ${\mathbb L}_2$ contains data \{0,T,F,TT,TF,FT,FF\}. 
For ${\mathbb L}_1$, we can specify an endomorphism $f$ of ${\mathbb L}_1$ by listing the values of $f$ on 0,T,F in standard order.  So, for example, 
0TF denotes the identity endomorphism of ${\mathbb L}_1$.  
Since ${\mathbb L}_1$ has 27 morphisms, we can be completely explicit about this space.  
Figure 7 lists the 27 endomorphisms and indicates endomorphisms with special properties. 
Even with relatively small space like ${\mathbb L}_1$, all of 
the classes of endomorphisms appear in a nontrivial way.

\begin{figure}[h]
\centering
\includegraphics[width=0.8\textwidth]{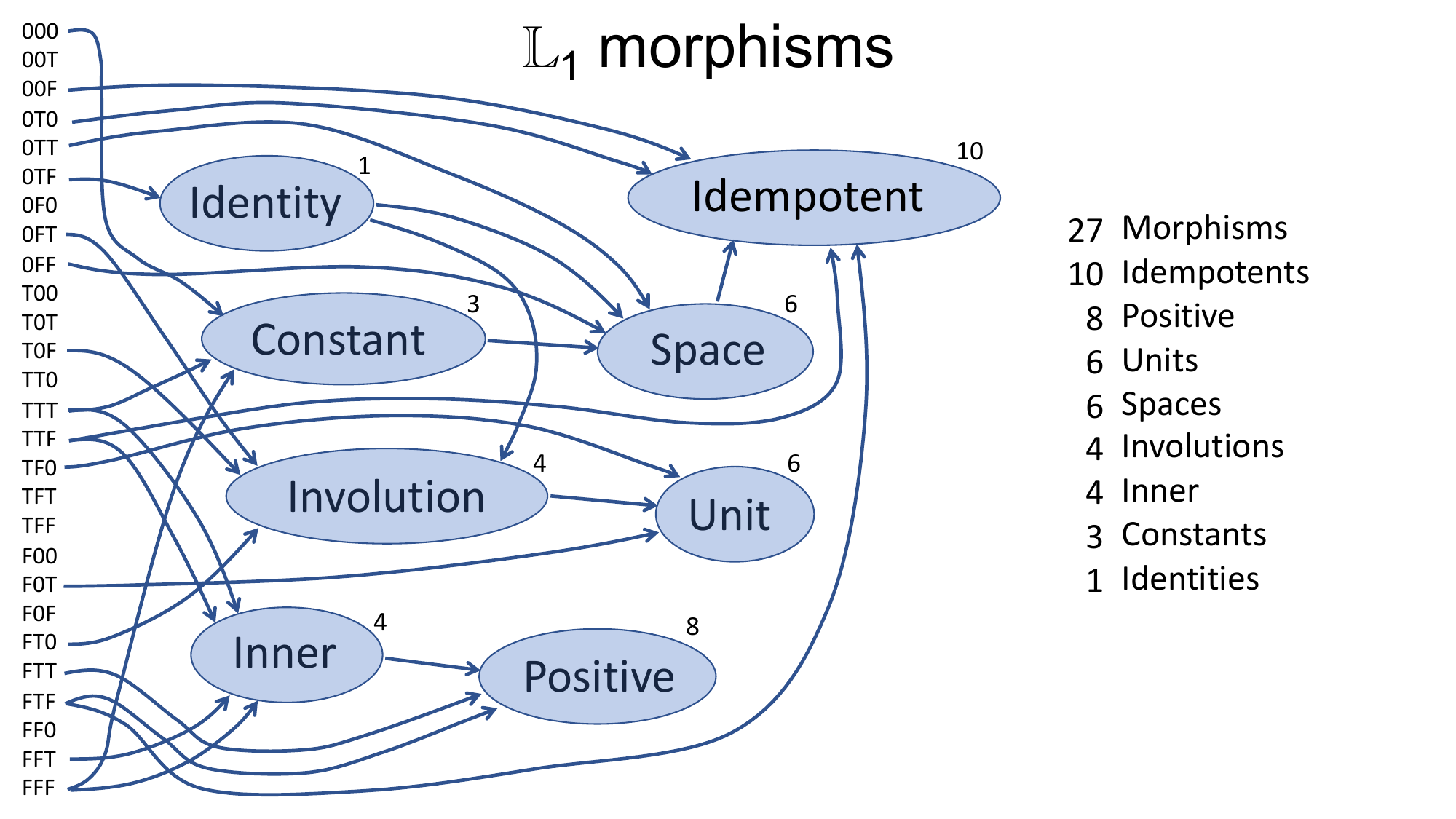}
\caption{{\it The endomorphisms of ${\mathbb L}_1$ are maps from \{0,T,F\} to itself. These are specified as the three function 
values in the column on the left.  The `positive' endomorphisms are the subsemiring of endomorphisms which are never neutral.  Even with a small space of this size, the $3^3=27$ all of the classes of endomorphism identified in the text 
appear in a non-trivial way.}}
\end{figure}

    Moving to ${\mathbb L}_2$, there are already $7^7=823,543$ endomorphisms; far too many to be as explicit as in the case of ${\mathbb L}_1$.  
 We can, however, at least examine the eight inner endomorphisms 
 \begin{equation}
 {\mathbb L}_2\cdot ({\rm put}\ b)\cdot X \cdot ({\rm get}\ b) \cdot {\mathbb L}_2
 \end{equation}
These depend only on the total number of (:) values in the $b$-atoms of input, and each 
inner endomorphism returns exactly one $b$-atom.  
The inner endomorphisms of ${\mathbb L}_2$ are central homomorphisms, so we have reason to expect them to be
mathematically interesting.   
Indeed, Table 2 shows the eight endomorphisms and their values on \{0,T,F,TT,TF,FT,FF\}.  Looking at the values for the last four entries \{TT,TF,FT,FF\} 
we recognize that these are slight generalizations of the eight standard symmetric binary boolean operators.  The values of the eight morphisms 
on \{0,T,F\} suggests that the standard binary operator names are not perfectly descriptive in this slightly larger context.  

    Further analysis of ${\mathbb L}$ is beyond the scope of this paper, but we note that the obvious isomorphism between (is a b) and ${\mathbb L}$ 
means that all of the semiring structures and subspaces found in (is a b) have copies in ${\mathbb L}$.  

\begin{figure}[h]
\centering
\includegraphics[width=1.0\textwidth]{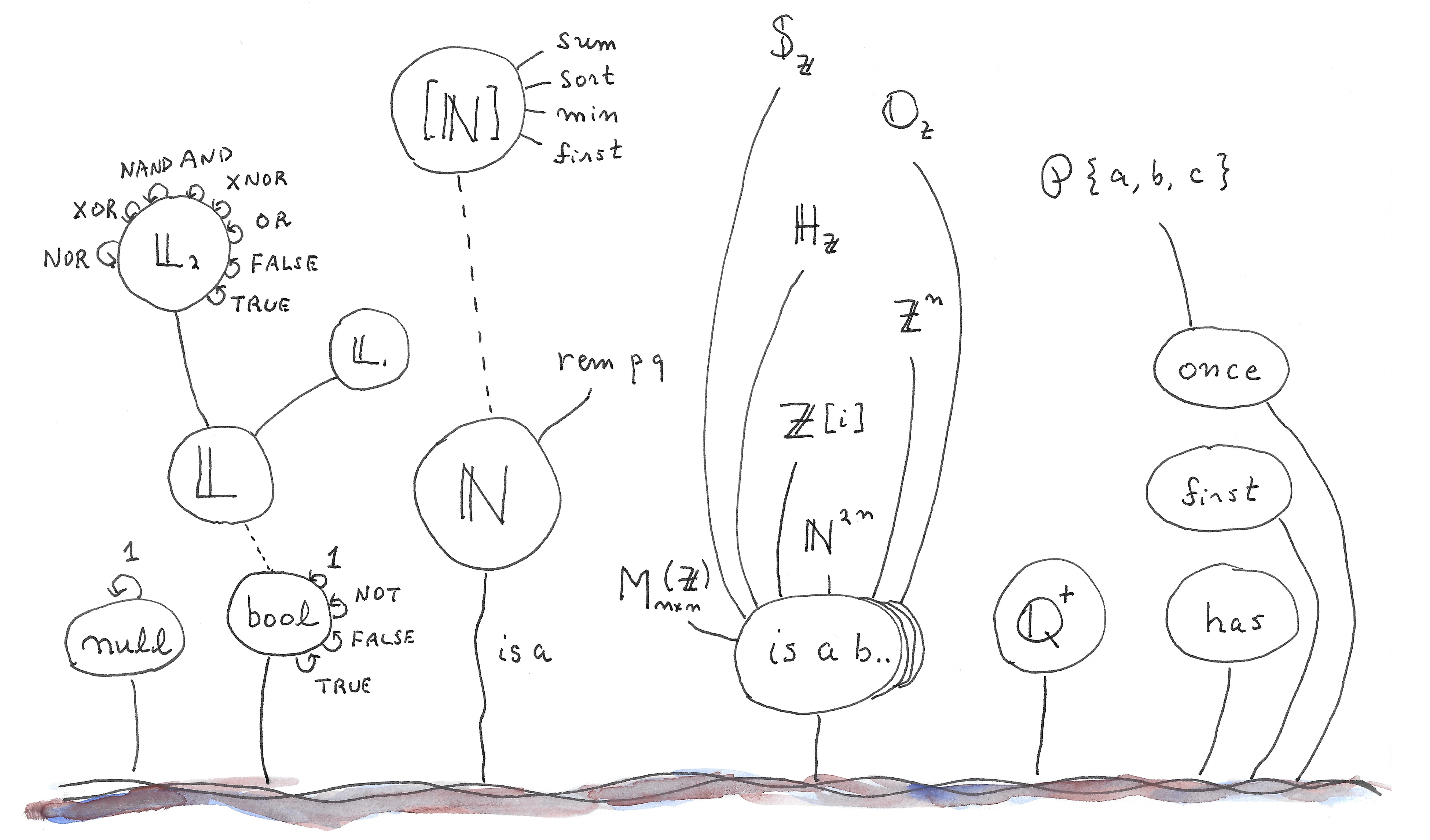}
\caption{{\it Spaces and subspaces discussed in the text, viewed as emerging ``organically'' from (pass), the space of all pure data. 
Organic numbers are subspaces of (is a..) are shown in the center with bool and descendants on the left, sets and the bespoke
${\mathbb Q}^+$ construction to the right of the figure.  Dashed lines indicate the action of Seq functors.}}
\end{figure}

\section{Discussion} 

      Within the axiomatic framework of pure data, one needs a concept of collections of mathematical objects analogous to sets in Set Theory or objects  
in Category Theory.   We find that data representing a collection of other data needs to satisfy an intuitive and simple condition.  Such data must 
be a {\it space}.  The morphisms corresponding to spaces are then just products of data which start and end with a space.  
A space, together with the endomorphisms of the space are a {\it semiring} which may be {\it semialgebras}, {\it central semialgebras} or 
{\it fields}.  This surprisingly rich structure is shared by all collections ranging from a single atom (:) to the entire space of all pure data as shown 
in Figure 3.  

     After working out the basic theory, we develop a set of examples, partly to illustrate the theory, and partly as an exploration of the general 
space of mathematical results from the pure data perspective.  We adopt the theme of letting spaces emerge ``organically'' from the simplest 
combinatoric definitions which are inevitable from the foundational finite sequence concept.  
A second major theme is to focus on algebraic data because such data, in some sense, transcends 
the foundational concept, earning a platonic meaning independent of the pure data framework.  A more 
sophisticated version of this idea is to focus on algebraic spaces and on central homomorphisms as the most likely objects 
of mathematical interest. 
Figure 8 illustrates the spaces explored with these considerations in mind.  
We find that the purely organic approach is at least partially successful.  A substantial fraction of natural combinatoric definitions are 
spaces already, and, for instance, the {\it organic numbers} show that a substantial fraction of the resulting spaces are important classical algebras.  
The rational numbers, as described in Section 11, is an example of a more intentional approach where one has something mathematical
in mind from the beginning.  This process is interestingly different from what one does in classical mathematics.  The rational numbers 
must be a space, since, basically, everything must be a space.  This means that rather than inventing axioms, one starts with a space 
large enough to encompass what one wants and then trys to coax it towards the desired space by passing into subspaces.  In this 
process one is only defining the main addition operation of a space, hoping or trusting that the resulting homomorphisms will end up 
doing what you intended from the beginning.  This seems much more constrained than standard mathematics.  On the other hand, 
one does seem to end up with the intended structure.  

\subsection{Open Questions}

These results raise a number of questions which might be answered by further research.  

\subsubsection{How can we find more ``organic'' spaces?}

One advantage of the pure data approach is that spaces are ``born with computation'' and the particular space (pass), by definition,  
contains {\it all} mathematical objects of any kind\cite{coda}.  Naively, we could search the space of pure data, say, up to 
a specified depth and width and check each data to see if it is a space, thus producing a list of {\it all} mathematical objects up to 
some specified level of simplicity.  
\begin{table}[h!]
\centering
\begin{tabular}{c|ccrrr}
\textbf{width} $\backslash$ \textbf{depth} & 0 & 1 & 2 & 3 & 4 \\
\hline
0 & 1 & 1 & 1 & 1 & 1 \\
1 & 1 & 2 & 5 & 26 & 677 \\
2 & 1 & 3 & 91 & $6.9\times 10^{7}$ & $2.2\times 10^{31}$ \\
3 & 1 & 4 & 4369 & $7.0\times 10^{21}$ & $1.1\times 10^{131}$ \\
4 & 1 & 5 & 406901 & $7.5\times 10^{44}$ & $1.0\times 10^{359}$ \\
\end{tabular}
\caption{{\it The number of pure data grows rapidly with maximum width and depth.}}
\label{tab:pure_data_counts}
\end{table}
This is actually how we started searching for ``organic'' spaces as illustrated in Figure 2.  Unfortunately, 
Table 3 shows how drastically impractical it is to extend a naive approach, even slightly.  
There are clearly strategies to do better than a brute force approach, but these are as yet unexplored.  

\subsubsection{Is there a missing theory of morphisms?}

The theory of section 6 is almost exclusively an ``inward facing'' look at the endomorphisms of one space at a time.  Other than 
subspaces, ``outward facing'' morphisms between spaces have barely been touched.  We have done little to relate spaces with 
similar properties to each other.  The situation may be analogous to understanding Set theory but not yet having a theory of categories 
where morphisms between similar spaces are understood in some coherent way.  One can expect a general theory of morphisms to exist, 
but it should also be substantially different than classical Category Theory, and so could be a source of new insights.

\subsubsection{How is the rest of Mathematics supposed to work?}

In \cite{PDF} and in this paper, we presume that mathematics in general fits into the overall framework of pure data.  
But we also argue that anything that's at least a collection of data must be a space, which automatically comes
with a semiring.  This appears to work well for the examples shown here.  The ``organically grown'' 
spaces, the ``bespoke rationals'' and the spaces resulting from Seq functors are all well described as semirings and 
all have a similar algebraic flavor.  However, it is not clear how well this will work for the parts of Mathematics 
which are far from the theory of semirings.

\subsubsection{Is there a deeper theory of spaces?}

The theory of spaces developed in Section 6 is mainly a collection of definitions and the most simple results.  This is only scratching 
the surface of what seems like a rich subject which is likely related to several areas of known mathematics 
\cite{golan1999,kuich1986,howie1995,semigroup,eilenberg1976,baader1998,dershowitz1990}.

\section{Glossary}
Some basic definitions used in the text and associated software\cite{coda}.  Lower case letters indicate atoms. 

\begin{enumerate}
\item {Basics
\begin{enumerate}
\item{$\delta_{\rm const}$ : (const A : B) $\mapsto$ A}
\item{$\delta_{\rm put}$ : (put A : B) $\mapsto$ (A:B)}
\item{$\delta_{\rm get}$ : (get A:B) $\mapsto$ ap (get0 A:B)}  
\item{$\delta_{\rm get0}$ : (get A: (A:B)) $\mapsto$ B}
\item{$\delta_{\rm atoms}$ : (atoms : b B) $\mapsto$ (:)\ (atoms : B)}
\end{enumerate}
}
\item {Control
\begin{enumerate}
\item{$\delta_{\rm if}$: (if ():B) $\mapsto$ B }
\item{$\delta_{\rm if}$: (if a A:B) $\mapsto$ () }
\item{$\delta_{\rm nif}$: (nif ():B) $\mapsto$ () }
\item{$\delta_{\rm nif}$: (nif a A:B) $\mapsto$ B }
\item{$\delta_{\rm while}$: (while A:B) $\mapsto$ B if (A:B)=B }
\item{$\delta_{\rm while}$: (while A:B) $\mapsto$ (while A: A : B) }
\end{enumerate}
}
\item{Semiring
\begin{enumerate}
\item{$\delta_{\rm prod}$: (prod a A:B) $\mapsto$ (a$^R$: prod A:B)}
\item{$\delta_{\rm prod}$: (prod :B) $\mapsto$ B}
\item{$\delta_{\rm sum}$: (sum a A:B) $\mapsto$ (a$^R$:B) (sum A:B)}
\item{$\delta_{\rm sum}$: (sum : B) $\mapsto$ () }
\end{enumerate}
}
\item{Definition}
\begin{enumerate}
\item{$\delta_{\rm n}$: (n A:B) $\mapsto$ (n A:B).  Defines n-atoms as fixed points with {\bf domain} n.} 
\item{$\delta_{\rm domain}$ : (domain : b B) $\mapsto$ ({\it domain of the atom b}) (domain:B)}
\end{enumerate}
\item{Combinatorics} 
\begin{enumerate}
\item{$\delta_{\rm map}$ : (map A:B) $\mapsto$ aq \{if ((arg:A):B):(right:B):B\} A:B}
\item{$\delta_{\rm aq}$: (aq a1 a2 A:B) $\mapsto$ (a1 a2:B) (aq a1 A:B)}
\item{$\delta_{\rm aq}$: (aq :B) $\mapsto$ ()}
\item{$\delta_{\rm aq}$: (aq A:) $\mapsto$ ()}
\item{$\delta_{\rm ar}$: (ar a A:B) $\mapsto$ (a a$_i$ : b$_j$) for $a_i$ in A and b$_j$ in B.} 
\end{enumerate}
\item{Sequence 
\begin{enumerate}
\item{$\delta_{\rm first}$ : (first : b B) $\mapsto$ b}
\item{$\delta_{\rm first}$ : (first : ()) $\mapsto$ ()} 
\item{$\delta_{\rm last}$ : (last : B b) $\mapsto$ b} 
\item{$\delta_{\rm last}$ : (last : ()) $\mapsto$ ()} 
\item{$\delta_{\rm has}$ : (has A:B) $\mapsto$ ap \{if (A=(domain:B)):B\} A : B}
\item{$\delta_{\rm hasnt}$ : (hasnt A:B) $\mapsto$ ap \{nif (A=(domain:B)):B\} A : B}
\item{$\delta_{\rm is}$ : (is A:B) $\mapsto$ ap \{nif\ (ar \{not:A=B\} A:B):B\} }
\item{$\delta_{\rm isnt}$ : (is A:B) $\mapsto$ ap \{if\ (ar \{not:A=B\} A:B):B\} }
\item{$\delta_{\rm once}$ : (once A:b B) $\mapsto$ (once A:b) (once A:B)} 
\item{$\delta_{\rm once}$ : (once A:b) $\mapsto$ b, if b$\in$A}
\item{$\delta_{\rm rev}$ : (rev : B b) $\mapsto$ b (rev:B)} 
\item{$\delta_{\rm rev}$ : (rev : () ) $\mapsto$ ()}
\item{$\delta_{\rm remove}$ : (remove A : A B) $\mapsto$ B} 
\item{$\delta_{\rm remove}$ : (remove A : B) $\mapsto$ B, if B does not start with A.} 
\item{$\delta_{\rm sort}$ : (sort A: B) $\mapsto$ B, sorted in lexical order.} 
\end{enumerate} 
}
\end{enumerate}


\clearpage

\end{document}